\documentclass[conference]{IEEEtran}

\usepackage{xspace}
\usepackage{threeparttable}
\usepackage{subcaption}
\usepackage{paralist}
\usepackage{wrapfig}
\usepackage{multirow}
\usepackage{listings}
\usepackage{comment}
\usepackage{makecell}
\usepackage{amsmath}

\usepackage{amssymb}
\usepackage{array}
\usepackage{pifont}
\usepackage{enumitem}
\usepackage{algorithm}
\usepackage{algorithmic}
\usepackage{float}
\usepackage{multicol}
\usepackage{eqparbox}
\usepackage{colortbl}
\usepackage{tcolorbox}
\usepackage{tabularx}
\usepackage{tikz}
\usepackage{booktabs}
\usepackage{url}
\usepackage{amsthm}
\usepackage[normalem]{ulem}
\usepackage{balance}
\usetikzlibrary{tikzmark, arrows.meta}

\newcommand{\cmark}{\ding{51}}%
\newcommand{\xmark}{\ding{55}}%
\newcommand{\whiteding}[1]{\ding{\numexpr171+#1\relax}}

\setenumerate[1]{itemsep=0pt,partopsep=0pt,parsep=\parskip,topsep=5pt}
\setitemize[1]{itemsep=0pt,partopsep=0pt,parsep=\parskip,topsep=5pt}
\setdescription{itemsep=0pt,partopsep=0pt,parsep=\parskip,topsep=5pt}

\newcommand{\tabincell}[2]{\begin{tabular}{@{}#1@{}}#2\end{tabular}}
\newcommand{\bheading}[1]{\noindent\textbf{#1}}
\newcommand{\iheading}[1]{\noindent\textit{#1}}
\newcolumntype{?}{!{\vrule width 1pt}}

\newcommand{\code}[1]{\textbf{#1}}
\newcommand{\Seal}{\ensuremath{\code{Seal}}}
\newcommand{\UnSeal}{\ensuremath{\code{UnSeal}}}

\newcommand{\IncTC}{\ensuremath{\code{IncTC}}}
\newcommand{\ReadTC}{\ensuremath{\code{ReadTC}}}

\newcommand{\codev}[1]{{\small\textit{#1}}}
\newcommand{\votedFor}{\codev{votedFor}\xspace}
\newcommand{\currentTerm}{\codev{currentTerm}\xspace}

\newcommand{\leader}{\codev{leader}\xspace}
\newcommand{\currentEpoch}{\codev{currentEpoch}\xspace}
\newcommand{\acceptedEpoch}{\codev{acceptedEpoch}\xspace}
\newcommand{\raftMeta}{\codev{raftMeta}\xspace}
\newcommand{\msg}[1]{\textsc{#1}}

\newcounter{note}[section]

\definecolor{darkgreen}{rgb}{0.0, 0.5, 0.0}

\colorlet{Mycolor1}{green!10!orange!90!}

\newcommand{\ssecref}[1]{\mbox{\S\ref{#1}}\xspace}
\newcommand{\figref}[1]{\mbox{Fig.~\ref{#1}}}

\newcommand{\ignore}[1]{}

\newcommand{\ie}{\textit{i.e.}\xspace}
\newcommand{\eg}{\textit{e.g.}\xspace}

\newcommand{\etal}{\textit{et al.}\xspace}
\newcommand{\sysname}{\textsc{Chimera}\xspace}

\newcounter{packednmbr}

\newenvironment{packeditemize}{
\begin{list}{$\bullet$}{
\setlength{\labelwidth}{0pt}
\setlength{\itemsep}{2pt}
\setlength{\leftmargin}{\labelwidth}
\addtolength{\leftmargin}{\labelsep}
\setlength{\parindent}{0pt}
\setlength{\listparindent}{\parindent}
\setlength{\parsep}{1pt}
\setlength{\topsep}{1pt}}}{\end{list}}

\makeatletter
\@ifundefined{theorem}{\newtheorem{theorem}{Theorem}}{}
\@ifundefined{lemma}{\newtheorem{lemma}{Lemma}}{}
\@ifundefined{corollary}{}{}
\@ifundefined{proposition}{}{}
\@ifundefined{example}{}{}
\@ifundefined{remark}{}{}
\@ifundefined{definition}{\newtheorem{definition}{Definition}}{}
\makeatother

\begin{document}

\title{\sysname: Protocol-Aware Recovery for Confidential BFT Consensus}

\author{
\IEEEauthorblockN{
Tong Liu\IEEEauthorrefmark{1},
Xiaoqing Wen\IEEEauthorrefmark{2},
Ziwei Zhou\IEEEauthorrefmark{3},
Si Liu\IEEEauthorrefmark{4},
Jianyu Niu\IEEEauthorrefmark{5},
Cong Wang\IEEEauthorrefmark{5},
Yinqian Zhang\IEEEauthorrefmark{1}
}
\IEEEauthorblockA{
\IEEEauthorrefmark{1}Southern University of Science and Technology,
\IEEEauthorrefmark{2}University of British Columbia,
\IEEEauthorrefmark{3}East China Normal University,\\
\IEEEauthorrefmark{4}Texas A\&M University,
\IEEEauthorrefmark{5}City University of Hong Kong
}
}

\maketitle

\begin{abstract}
Trusted Execution Environments (TEEs) have enabled confidential Byzantine Fault-Tolerant (BFT) consensus systems with confidentiality and improved scalability. However, TEEs do not provide state continuity: during recovery, a compromised host can roll back a crashed enclave to a stale persistent state, significantly threatening both safety and availability. Existing defenses face a fundamental tradeoff: they either impose substantial overhead on critical consensus paths, reducing throughput and increasing latency, or incur prolonged recovery delays, hurting availability.

We present the first systematic taxonomy of rollback-resilient recovery for confidential BFT consensus, distilling prior approaches into four categories. We further expose their inherent limitations. Guided by this detailed analysis, we design \sysname, a protocol-aware recovery framework that breaks this tradeoff. Our key insight is that rollback protection in consensus systems should not be uniform. Different types of persistent states differ fundamentally in their state distributions, update behaviors, and representations. \sysname separates persistent state into metadata and logs according to these protocol-level properties and applies distinct recovery mechanisms to each type.
We formally model \sysname in Maude and verify its safety and liveness properties. 
We implement it on Braft and ZooKeeper using Intel TDX, and evaluate it in both LAN and WAN settings. Results show that \sysname achieves higher throughput, lower recovery latency, and better availability than state-of-the-art rollback-resilient baselines.
\end{abstract}

\IEEEpeerreviewmaketitle

\section{Introduction} \label{sec:intro}

Confidential Byzantine Fault Tolerant (BFT) consensus, which uses Trusted Execution Environments (TEEs) to improve system scalability and confidentiality, has recently gained significant traction. 
By leveraging TEEs, confidential BFT consensus enables a set of nodes to agree on an ever-growing, consistent log of transactions while preserving client confidentiality in the presence of Byzantine behavior (\ie, arbitrary protocol deviation) among nodes. 
Due to the promising consistency, confidentiality, and fault tolerance properties, it has been used in many decentralized or cloud services, including datastores~\cite{jeffery2024lskv, securekeeper, svr3}, blockchains~\cite{engraft, yan2020confidentiality, oasis2020}, cloud computing~\cite{2019ccf, 2023ccf, ccf2025}, and self-expiring data objects~\cite{gao2021teekap}. 

More specifically, confidential BFT consensus offers two key advantages over classic BFT protocols. 
First, due to the \textit{integrity} guarantees of TEEs, parties running inside TEEs cannot equivocate their messages (\eg, votes). Therefore, confidential BFT consensus achieves Byzantine fault tolerance by running Crash Fault Tolerant (CFT) protocols, such as Raft~\cite{raft2014}, inside TEEs. This design improves scalability by allowing a smaller node size and less computational overhead.
Second, due to the \textit{confidentiality} guarantees of TEEs, the secrecy of computations (\eg, blockchain transactions) is preserved against untrusted nodes, enabling wider applicability across sensitive use cases.

Nonetheless, a well-known Achilles' heel of confidential BFT lies in the lack of state continuity in TEEs. State continuity allows a TEE to preserve a consistent, tamper-resistant state across crashes and restarts. In fault-tolerant protocols, a persisted state enables a crashed node to recover and rejoin the system. However, an adversary that controls the host can provide stale state to the TEE during recovery, causing a rollback attack~\cite{engraft, 2023ccf}. 

Recent years have seen substantial efforts on achieving rollback resilience in such settings~\cite{Rote, narrator, Ariadne, Memoir, engraft, gupta2022dissecting, 2023rr, ccf2025, achilles, nimble}. We categorize these approaches into four categories: Trusted Counter (TC), Diskless Crash Recovery (DCR), Rollback Fault Tolerance (RFT), and Reconfiguration (RC). 
Yet each category exemplifies a fundamental trade-off between performance and availability: they either impose high overhead on critical consensus paths or incur prolonged recovery delays. (See \ssecref{sec:taxonomy} for a detailed discussion.)

In this paper, we aim to answer the question: \textit{How can we design rollback-resilient recovery for confidential BFT consensus that maintains high performance during normal operation and minimizes downtime during recovery?} 

Our key insight is that the above tradeoff stems from a common design choice: existing defenses protect persistent state uniformly—applying the same protection to all state—even though different types of state differ in how they are updated, stored, and recovered by the protocol.
By classifying persistent states according to their protocol-level characteristics, we can design recovery mechanisms tailored to each type.
For leader-based consensus, these states can be divided into two categories: metadata and logs. Metadata is a set of node-local control states that records each node’s view of the consensus process, while logs are replicated histories, with replications maintained according to consensus rules.

Guided by this distinction, we propose \sysname, a protocol-aware recovery framework for confidential BFT consensus that customizes recovery for metadata and logs.
{For metadata, \sysname uses a TC-based mechanism to securely and timely recover metadata from local storage. For logs, \sysname uses protocol-guided cluster recovery that leverages consensus replication rules to rebuild a safe log from other replicas.}
This design balances performance and availability: it confines TC overhead to infrequent metadata updates, avoiding TC updates on the high-frequency log replication path; it also reduces recovery downtime by avoiding epoch skips and allowing crashed leaders to resume without waiting for new leader election.

Applying such protocol-aware recovery, however, is nontrivial.
\emph{First}, for metadata, unexpected crashes may occur between the counter increment and metadata sealing. Without proper handling, a recovering node may fail to restore its metadata if the counter and persistent states are mismatched. Achilles~\cite{achilles} proposes skipping several epochs to avoid double voting, but this can render the node temporarily unavailable. We analyze the update process of metadata and design a binding update solution to address this issue (\ssecref{subsec:metadataDesign}).
\emph{Second}, in log recovery, a crashed leader may leave an entry replicated to only a subset of nodes, failing to reach quorum. To support leader recovery, the recovered leader must confirm its leadership and identify any unfinished log entries; otherwise, it may re-propose entries at the same indices, resulting in conflicting entries at the same log position (\ssecref{subsec:logDesign}).
\emph{Third}, log synchronization from the network is time-consuming. To mitigate this cost, \sysname uses metadata to identify the log range that needs to be synchronized. This approach reduces unnecessary log transmission during recovery and enables recovering nodes to rejoin the protocol with minimal delay (\ssecref{subsec:logDesign}).

We prototype \sysname by extending two open-source industrial CFT consensus platforms: Braft~\cite{braft}, a high-performance implementation of Raft~\cite{raft2014}, and ZooKeeper~\cite{ZooKeeper}, a widely adopted coordination service built on the Zab~\cite{zab} consensus protocol. We select Raft and Zab because they are the most widely used CFT consensus protocols for building confidential BFT consensus~\cite{svr3, 2019ccf, 2023ccf, engraft, jeffery2024lskv, securekeeper}. Both prototypes are implemented on top of Virtual Machine (VM)-based TEE, \ie, Intel TDX~\cite{tdx_white}. We formally model \sysname in Maude and verify its safety and liveness properties. We conduct extensive experiments on a public cloud platform to evaluate and compare \sysname with four counterparts, particularly their performance during recovery over LAN and WAN.

\bheading{Contributions.} Our main contributions are as follows:
\begin{packeditemize}
    \item We provide the first taxonomy of rollback-resilient solutions for TEEs in distributed systems, and organize them into four categories. We further analyze why applying uniform rollback protection is insufficient for persistent states with differing protocol-level characteristics in these systems.

    \item We propose \sysname, a protocol-aware recovery framework for confidential BFT consensus. It tailors recovery mechanisms to metadata and logs according to their characteristics. To ensure its correctness (\ie, safety and liveness), we perform a thorough security analysis and conduct a formal modeling and verification of \sysname.

    \item We implement \sysname atop Braft and ZooKeeper with Intel TDX, and integrate optimizations to reduce recovery communication. Our extensive evaluation demonstrates that \sysname delivers superior performance.
\end{packeditemize}

\section{Background} \label{sec:bg}
\subsection{TEEs and Their State Continuity}

Trusted Execution Environments (TEEs) are hardware-supported execution environments that protect sensitive code and data. They achieve this protection through mechanisms such as memory encryption, hardware-enforced isolation, and remote attestation. Representative platforms include enclave-based TEEs such as Intel SGX~\cite{sgx2013}, VM-based TEEs such as Intel TDX~\cite{tdx_white} and AMD SEV~\cite{sev2020strengthening}, and other hardware isolation architectures such as ARM TrustZone~\cite{Alves2004TrustZone}. Compared with enclave-based TEEs, VM-based TEEs allow existing applications to benefit from TEE protection with minimal code changes and low additional overhead. We therefore build our implementation on VM-based TEEs. 

TEEs have been adopted in a wide range of stateful applications, including blockchains~\cite{ekiden, Teechain, yan2020confidentiality, TeeRollup, mecury, fides}, trusted storage~\cite{oprea2007integrity,jeffery2024lskv, securekeeper, svr3}, authentication rate limiting~\cite{Ariadne}, and cloud computing services~\cite{2019ccf, 2023ccf, ccf2025, nimble, narrator}. However, despite their popularity, TEEs still lack state continuity guarantees: rollback attacks can revert a TEE to a prior state, undermining the security of applications built atop them.

\bheading{State Continuity.} State continuity of TEEs mandates that when a stateful TEE application resumes execution from an interruption (\eg, reboots or system crashes), it must resume in the same state as before ~\cite{Memoir}. This property is critical for applying TEEs in distributed systems, since nodes' ability to recover from crashes directly determines system reliability.~\cite{ghemawat2003google, hamilton2007designing, Grapevine}. Existing TEE platforms provide sealing functionality, allowing TEE applications to encrypt and store their state to untrusted persistent storage. The sealing key can be configured to be accessible to all enclaves with the same MRENCLAVE or MRSIGNER~\cite{narrator}.

\bheading{Rollback Attacks.} The sealing functionality can ensure the integrity of retrieved data, but does not provide freshness guarantees~\cite{priebe2018enclavedb}. 
An adversary controlling the OS can roll back a TEE application to a previous state by providing it with stale data, resulting in a rollback attack~\cite{wilde2024forking}.
These attacks break state continuity and undermine the security guarantees of TEEs for various stateful applications, particularly confidential BFT consensus systems.
For instance, if a node is rolled back after casting a vote, it may re-enter a previous state and cast the vote again. Such repeated voting can lead to equivocation, violating the safety properties of the protocol.

\subsection{Confidential BFT Consensus} \label{subsec:leaderCFT} 
Confidential BFT consensus ports CFT consensus protocols into TEEs to provide BFT services. 
Examples include SVR3~\cite{svr3} adopted by Signal~\cite{signal}, CCF~\cite{2023ccf} deployed on the Azure cloud platform~\cite{ccfazure}, Engraft~\cite{engraft}, Hyperledger Fabric~\cite{HyperSrds}, and SecureKeeper~\cite{securekeeper}. 
Among them, SVR3, CCF, and Engraft are built on Raft~\cite{raft2014}, whereas SecureKeeper uses Zab~\cite{zab}.
Raft and Zab are well-established leader-based CFT consensus protocols. Both provide similar protocol-level semantics, which are leveraged in our design.

\begin{packeditemize}
    \item\textbf{Epoch.} The leader-based consensus protocol partitions time into logical units called epochs (referred to as terms in Raft). For each epoch, at most one node, called the leader, is elected and agreed upon by a quorum. When the leadership needs to change, the epoch is incremented to reflect the transition.

    \item\textbf{Leader Election.} When nodes detect that the leader is unavailable, they initiate a leader election to preserve liveness. Specifically, each node casts at most one vote per epoch and persists this voting behavior to prevent double voting. A candidate node becomes a leader when it receives votes from the majority of nodes. A non-leader node is called a follower. In this paper, we refer to the persisted epoch and voting behavior as metadata.
 
    \item\textbf{Log.} The log is a sequence of client transactions maintained by nodes, where each entry contains one or a batch of transactions. Each entry is uniquely identified by an epoch number and a monotonically increasing index. Nodes commit entries in a contiguous sequence sorted by index without gaps. Unlike metadata, logs are replicated across nodes following consensus rules.
\end{packeditemize}

\section{Recovery Taxonomy} \label{sec:taxonomy}
We present the first systematic analysis of TEE recovery mechanisms in distributed systems, categorizing them into four groups, as shown in Table~\ref{table:recoveryComparison}.
This taxonomy provides a direct comparison of the approaches, as we present next. More importantly, it shows why rollback recovery for confidential BFT must be protocol-aware rather than uniform.

\bheading{Trusted Counter (TC).} TC is a tamper-resistant counter whose value, once incremented, cannot be reverted to a previous value~\cite{VirtualCounter}. It can be implemented in hardware (\eg, TPM~\cite{VirtualCounter} or SGX monotonic counters~\cite{sgx_counter}) or in software (\eg, distributed trusted KV~\cite{Rote, narrator}) (Appendix~\ref{appen:TC}). 
It is the most popular and general approach for addressing TEEs' rollback attacks~\cite{Rote, narrator, narrator-pro, Ariadne, Memoir, engraft}. 

This approach performs two operations for each state update: (1) incrementing the counter value, and (2) sealing the updated state together with the counter value in persistent storage. During recovery, a node retrieves the sealed states and validates its freshness by comparing the stored counter value against the current one. Although this design is relatively simple and allows the system to recover the exact pre-crash state, it suffers from the following limitations:

\begin{packeditemize}
    \item \textit{High read/write overhead.} Each state update requires a counter write, introducing substantial overhead. Updating a hardware-backed counter, such as a TPM counter, takes roughly 97 ms, while reading it for state verification takes about 35 ms~\cite{Ariadne}. Software-based counters typically incur one or two extra communication rounds~\cite{Rote, narrator, narrator-pro}. Such overhead makes TC unsuitable for a frequently updated state.
    
    \item \textit{Inc-store consistency dilemma.} {Since counter increment and state sealing cannot be performed atomically, two usage patterns have emerged. The \textit{inc-then-store} pattern~\cite{Rote, narrator, narrator-pro, engraft} increments the counter before sealing the state, preserving safety but risking availability: a crash after the increment but before sealing makes recovery impossible. In contrast, the \textit{store-then-inc} pattern~\cite{Memoir, Ariadne} favors availability over safety, since a crash during the interval can result in rollback. Further discussion is provided in Appendix~\ref{appen:TC}.}

    \item \textit{Detection-only recovery.} {TC can determine the freshness of a persisted state by comparing counter values, but it cannot identify which state is safe for recovery; in other words, it provides only rollback detection, not full recovery.}
\end{packeditemize}

\begin{table}[t]
    \caption{Rollback-resilient solutions.}
    \label{table:recoveryComparison}
    \centering
    \scalebox{0.97}{
    \begin{tabular}{@{}lcccc@{}}
        \toprule[1pt]
        Approach & \tabincell{c}{Recovery \\ Pattern} & \tabincell{c}{Recovery \\ Completeness} & \tabincell{c}{Performance \\ Overhead} & \tabincell{c}{Availability \\ Degradation} \\
        \midrule
        TC   & Local       & High       & High & Low \\
        DCR  & Distributed & Medium     & Low  & High \\
        RFT  & Local       & High$^{*}$ & High & Low \\
        RC   & Distributed & Low        & Low  & High \\ 
        \hline
        \textbf{\sysname} & Hybrid & High & Low & Low \\
        \bottomrule[1pt]
    \end{tabular}}
    
    \vspace{3pt}
    {\footnotesize $^{*}$ RFT enables a node to recover its state, but cannot prevent state rollback.}
    \vspace{-4mm}
\end{table}

\bheading{Diskless Crash Recovery (DCR).} DCR enables a node to recover its state from the in-memory state of other nodes in the cluster~\cite{michael2016providing, liskov12vr, chandra2007paxos, konczak2011jpaxos}. Specifically, DCR is effective at recovering redundant data, such as logs, because the protocol's replication rules allow recovery to locate a node with a sufficiently safe copy. It relies only on naive log replication and introduces no additional overhead on the consensus path. However, directly applying DCR to confidential BFT recovery exposes the following shortcomings.

\begin{packeditemize}
    \item \textit{Epoch skipping for safety.} DCR cannot recover node-local metadata such as epochs and voting behavior. To prevent rollback-induced double voting, a recovering node must first infer a safe upper-bound epoch based on the protocol's metadata update rules. The node then recovers directly into that epoch, skipping earlier epochs. As a result, it may remain unavailable until the cluster reaches the same epoch.

    \item \textit{Protocol-specific recovery rules.} The safe epoch inferred by DCR depends on the protocol's metadata update rules. Different protocols may require skipping varying numbers of epochs, and those without bounded epoch growth may fail to provide a usable safe epoch. Consequently, both the recovery rules and the unavailability period are protocol-specific.

    \item \textit{No leader recovery.} DCR requires an active leader. If the leader crashes, recovering nodes can only resume as followers and must wait for a new leader to be elected. This dependency prolongs the period of unavailability after leader failures.
\end{packeditemize}

\bheading{Rollback Fault Tolerance (RFT).}  
RFT addresses rollback attacks by adjusting read/write quorum sizes according to the number of potentially rolled-back nodes. RR~\cite{2023rr} modifies read/write quorum sizes to guarantee intersection with up-to-date nodes, whereas FlexiBFT~\cite{gupta2022dissecting} increases the overall node size to $3f+1$, aligning with classical BFT requirements. In RFT, a recovering node can quickly retrieve its state—which is not protected against rollbacks—and rejoin the protocol without affecting system availability. This design choice brings the following limitations:

\begin{packeditemize}
    \item \textit{Scalability limitation.} 
    Increasing node and quorum sizes undermines log commit performance, particularly in large deployments. Evaluation results show that when $f$ is small, the performance gap between the $2f{+}1$ and $3f{+}1$ configurations atop Braft (\ie, Braft-DR and Braft-RFT, respectively) is around 10\%. However, this gap grows with larger $f = 20$, reaching roughly 20\% (\ssecref{sec:normalEvaluation}).
    Increasing the node size also conflicts with prior TEE-aided designs that aim for smaller node sizes~\cite {bessani2023vivisecting, Clement}.
\end{packeditemize}

\bheading{Reconfiguration (RC).} RC enables the system to dynamically modify its node set by adding or removing nodes. It is a more general approach than simple recovery, as a recovering node can rejoin the system as a newly added participant without having to recover any prior state. 
Notable confidential BFT protocols, CCF~\cite{2023ccf} and Recipe~\cite{2025recipe} adopt this design. Since RC does not affect the log replication, no extra overhead is introduced.  Its drawbacks appear during recovery:
 
\begin{packeditemize}
    \item \textit{High-cost membership changes.} RC typically requires complex recovery designs. To ensure safety, most protocols require nodes to reach consensus on the order of membership changes, so that all nodes apply them consistently. This coordination can delay concurrent log commits. For example, RC in Raft requires two rounds of consensus to perform a membership change safely.

    \item \textit{Expensive synchronization.} A rejoining node must synchronize the entire application state from scratch~\cite{2023ccf}. For applications with large state, such as blockchains, this process can involve transferring hundreds of gigabytes of data and may take several hours to days~\cite{kim2021ethanos,feng2024slimarchive}.
\end{packeditemize}

\bheading{Summary.} Existing recovery approaches in confidential BFT consensus follow a one-size-fits-all design: each approach tries to protect or recover all persistent states with the same mechanism. As a result, the system pays the worst-case cost for the least suitable state type, rather than exploiting protocol-level knowledge and the characteristics of metadata and logs. This observation motivates a protocol-aware recovery design that customizes and strengthens recovery for each state type.

\section{Problem Statement} \label{sec:problemStatement}

\subsection{System Model} \label{subsec:sysmodel}
Following the model of prior confidential BFT consensus~\cite{engraft, 2023ccf}, we consider a distributed system maintained by $n = 2f+1$ nodes $\{p_1, p_2,\ ...\ , p_{n}\}$, each equipped with a TEE. Confidential BFT consensus runs entirely inside each node's TEE.
We assume a Public Key Infrastructure (PKI): each node $p_i$ has a public/private key pair, denoted by $(pk_i, sk_i)$, in which the private key is accessible only within the node's TEE. A message $m$ signed with $sk_i$ is denoted by $m_{\sigma_i}$. 
We assume that a finite set of clients sends transactions to nodes' TEEs for confidential BFT service over encrypted and authenticated channels (\eg, TLS). 

\bheading{Threat model.}
We assume an adversary $\mathcal{A}$ that can corrupt at most $f$ nodes at any time and any number of clients. Following prior study~\cite{engraft, 2023ccf}, corrupted nodes are \textit{Byzantine}, \ie, behaving arbitrarily, with the exception that TEE integrity and confidentiality cannot be breached (introduced shortly). The other nodes that faithfully follow the protocol and remain operational (\ie, participating in consensus) are \textit{correct} nodes. The rationale behind this assumption is provided in Appendix~\ref{appen:restriction}.

The adversary gains full control over the OS of the corrupted node: it can manipulate network messages between TEEs and arbitrarily start, stop, and invoke TEEs.
Moreover, the adversary can provide the TEE with stale data to rollback the state~\cite{Rote, Ariadne, Memoir, Dijk07, ICE, engraft}.  
We do not consider cloning attacks~\cite{Rote, narrator} or micro-architectural side-channel attacks~\cite{Schwarz:2019:zombieload,van:2019:ridl,chen:2019:sgxpectre}, as they are orthogonal to this work. 
Cloning attacks can be mitigated using TPM PCR~\cite{Ariadne}, while side-channel attacks can be addressed via software-level countermeasures, particularly in cryptographic libraries such as OpenSSL and Intel SGX SSL~\cite{shih2017tsgx, oleksenko2018varys}.

\bheading{Network Model.} 
We adopt the partially synchronous network model~\cite{dwork1988consensus}, which is commonly used in consensus~\cite{castro1999practical, HotStuffYin2019, fastHotStuff, gai2023scaling}. 
In this model, there is an established bound $\Delta$ and an undefined Global Stabilization Time (\textsf{GST}). After the \textsf{GST} point, the delivery of any message transmitted between two correct nodes within the $\Delta$ limit is guaranteed. That is, the system behaves \textsl{synchronously} following the \textsf{GST}. Liveness is guaranteed after \textsf{GST}.

\subsection{Problem Statement}\label{subsec:problemstatement}
In confidential BFT consensus, each node runs a customized leader-based CFT protocol (Appendix~\ref{appen:customization}) inside its TEE to commit and execute client transactions. The protocol proceeds in a sequence of epochs, each representing a logical leader term. Ideally (\ie, without rollback issues), one node is elected as the leader in each epoch, and a majority of nodes agree on this choice (\ie, election safety).
The leader batches client transactions, proposes them as log entries, and replicates each entry to the logs of all nodes $p_i$ within the TEE. Each log entry is tagged with an epoch and log index $(ep, idx)$. Here, $ep$ denotes the epoch in which the entry is proposed, and $idx$ denotes its position in the log. 
The leader appends each entry to its local log and persists it. It then replicates the entry to all followers. Each follower appends and persists the entry before sending an acknowledgment. Upon receiving acknowledgments from a majority of followers, the leader marks the entry as committed.

As in prior work~\cite{castro1999pbft}, confidential BFT consensus provides two fundamental guarantees: \textbf{safety} and \textbf{liveness}. Safety requires the service to be linearizable: no two correct nodes commit different entries at the same log index $idx$. Liveness ensures that every transaction submitted by a client is eventually committed.

\bheading{Recovery under TEE Rollbacks.}
The integrity guarantees of TEEs allow confidential BFT consensus to maintain safety and liveness during normal execution~\cite{castro1999pbft}. However, recovery in the presence of TEE rollbacks introduces additional challenges. In leader-based consensus protocols such as Raft~\cite{raft2014}, safety and liveness can be refined into three critical conditions during recovery. Specifically, to preserve safety, the recovery procedure must satisfy the following two properties:

\begin{definition}[Election Safety]
For any epoch $ep$ and any two distinct nodes $p_i$ and $p_j$, it is impossible for both $p_i$ and $p_j$ to be elected leaders in $ep$.
\end{definition}

\begin{definition}[Leader Completeness]
For any log entry $en$ committed at index $idx$ in epoch $ep$, every leader elected in a later epoch $ep' > ep$ must contain $en$ at index $idx$ in its log.
\end{definition}

Election safety guarantees a unique leader for each epoch, while leader completeness ensures that all committed log entries are preserved across subsequent leaders. During recovery, the system must maintain correct commitment progress and restore all committed entries.

For liveness, the recovery procedure must ensure that an uncorrupted recovering node can eventually resume participation, as formalized below.

\begin{definition}[Recovery Liveness]
For any uncorrupted node $p_i$, if $p_i$ starts recovery and remains uncorrupted, then $p_i$ eventually completes recovery and resumes participation in the protocol.
\end{definition}

\section{\sysname Overview} \label{sec:overview}
We present an overview of \sysname, a protocol-aware recovery framework for confidential BFT consensus. \sysname first characterizes the persistent state maintained by leader-based consensus protocols (\ssecref{subsec:character}). It then applies a recovery strategy specialized for each state type (\ssecref{subsec:protocolaware}). 
This framework can achieve rollback-resilient recovery with both high performance and high availability, while preserving \textit{election safety}, \textit{leader completeness}, and \textit{recovery liveness}.

\subsection{Characterizing System State} \label{subsec:character}
In a leader-based CFT consensus, three types of system state require external persistence: metadata, logs, and snapshots. Snapshots are primarily used to accelerate log synchronization; therefore, their recovery is not essential for preserving safety guarantees and does not need rollback-resilient mechanisms. In this work, we focus on the two safety-critical states—metadata and logs—as summarized in Table~\ref{table:character}.

\bheading{Metadata.}
Metadata denotes a set of safety-critical control variables that govern leader election and epoch transitions. Metadata is updated infrequently and generally changes during leader transitions. It typically includes the current epoch and the node's voting record within that epoch. For example, in Raft, metadata consists of \currentTerm (the epoch identifier) and \votedFor (the candidate voted for in that epoch). The epoch identifier allows nodes to recognize larger epochs and reject stale leadership attempts. The voting record prevents double voting, ensuring election safety.

Two properties make metadata distinct for recovery. First, metadata is node-local: it records a node’s own view of consensus. As a result, other replicas cannot reliably reconstruct the recovering node's exact epoch or vote. Therefore, recovery must balance safety and availability: ensuring freshness of metadata preserves election safety, but entering a higher epoch may temporarily reduce availability. This trade-off makes metadata well-suited for TC-based local freshness protection, which enables precise, node-local recovery. Second, metadata is scalar and can be stored in a register-like form. In particular, the epoch increases monotonically during leader transitions, which aligns naturally with the TC design. Moreover, the low update frequency of metadata ensures that introducing TC incurs minimal additional overhead.

\bheading{Log.}
Each node maintains a log that consists of ordered entries. An entry is considered committed once it has been replicated to a majority of nodes. Leader completeness ensures that all committed entries are preserved across subsequent leaders. To recover safely, we only need to make sure that the commit process is correct and that committed entries are durable. Each node that replicates an entry must ensure the entry's durability before it is considered committed. By doing so, all committed entries remain available for future recovery and are preserved by subsequent leaders.

Logs differ from metadata in distribution and form. Unlike metadata, which is node-local, logs are replicated across multiple nodes in a structured way. Therefore, we can leverage the replicas to reconstruct a safe log during recovery. Moreover, logs are large and stored on disk, making it infeasible to maintain a direct mapping to a trusted counter. Version numbers can be used to detect inconsistencies, but they do not enable full recovery of log contents. 

\begin{figure}[t]
  \centering
  \includegraphics[width=0.95\linewidth]{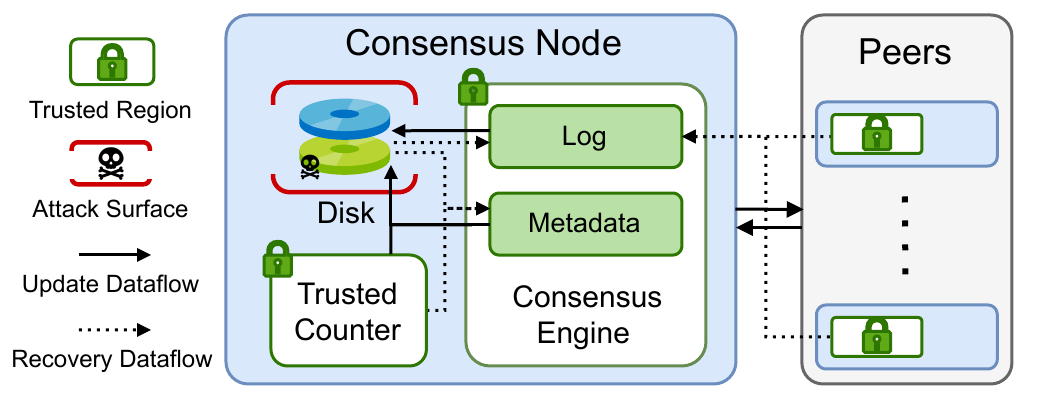}
  \caption{Architecture of \sysname.}
  \label{fig:ach}
  \vspace{-4mm}
\end{figure}

\subsection{Protocol-Aware State Recovery} \label{subsec:protocolaware}
We leverage protocol-level knowledge, \ie, the inherent characteristics of logs and metadata, to carefully tailor recovery strategies, as illustrated in \figref{fig:ach}.

\bheading{Metadata Recovery.}
\sysname protects metadata using a trusted counter (TC), leveraging the counter's node-local nature to enable precise recovery. The TC enables a recovering node to verify the freshness of sealed metadata locally, ensuring election safety without coordinating with other replicas. This approach allows precise local recovery while avoiding unnecessary loss of availability, capturing the trade-off between freshness guarantees and state continuity.

Raw TC alone, however, is insufficient: counter increments and metadata sealing are not atomic, so a mismatch may indicate either a rollback or a crash during the update window. To address this, \sysname proposes a binding-update design, motivated by the similarity between metadata and the TC. The binding-update design ties the epoch to the counter value to ensure consistent metadata recovery. When the counter and epoch match, the node can safely restore the sealed metadata. Otherwise, the counter serves as a safe epoch anchor, enabling recovery without conservative epoch skipping.

\bheading{Log Recovery.} 
For logs, \sysname relies on protocol-guided cluster recovery rather than per-entry TC protection. Consensus replication ensures that committed entries are preserved on a quorum of replicas. During recovery, a node can query a quorum of replicas to reconstruct a log that safely includes all committed entries. To guarantee safety, each node that replicates an entry must ensure the entry's durability before it is considered committed.

\sysname also tailors recovery to the node’s role to improve availability. A recovering follower can use metadata to avoid unnecessary log transfer during catch-up. A recovering leader, in contrast, can resume service after reconstructing a safe log without waiting for a new leader election. By leveraging both protocol-level guarantees and node role information, \sysname enables safe and efficient log recovery while preserving availability.

\begin{table}[t]
    \caption{The characteristics of metadata, log, and snapshot.} 
    \label{table:character}
    \centering
   \scalebox{1.05}{
    \begin{tabular}{@{}lccccc@{}}
        \toprule[1pt]
        State Type & \tabincell{c}{Safety \\ Critical} & \tabincell{c}{Recovery \\ Requirement } & \tabincell{c}{System \\ Redundancy} & \tabincell{c}{Update \\ Frequency} \\
        \midrule
        Metadata & \cmark & Precise & \xmark & Low \\
        Log & \cmark & Loose & \cmark & High \\
        \rowcolor[gray]{0.9}  
        Snapshot & \xmark & Loose & \xmark & Low \\
        \bottomrule[1pt]
    \end{tabular}}
    \vspace{-4mm}
\end{table}

\section{\sysname Design}

\subsection{Data Structures and Interfaces}
\bheading{Data Structures.} \sysname handles the update and recovery of metadata and logs separately.

\iheading{1) Metadata.}
Metadata captures epoch and leader-election information, such as the current epoch $ep$ and the node's voting record. We denote the metadata of node $p_i$ as $md_i$. Each metadata update corresponds to an operation $op$ (\eg, sending an election vote), which is executed only after the updated metadata has been sealed.

\iheading{2) Log.}
Each node $p_i$ maintains a local log $log_i$, which consists of a sequence of entries $en_j$ indexed by $j$. A log entry is identified by its epoch and index; we refer to this pair $(ep,j)$ as the entry's \emph{epoch-index tag}. We compare epoch-index tags lexicographically: $(ep_1, idx_1) >
(ep_2, idx_2)$ iff $ep_1 > ep_2$, or $ep_1 = ep_2$ and $idx_1 > idx_2$.

\bheading{Interfaces.} \sysname relies on sealing interfaces for protecting states stored in external storage, and on TC interfaces for maintaining freshness across crashes.

\iheading{1) Sealing interfaces.} TEEs provide two sealing interfaces: 

\begin{packeditemize}
    \item \Seal (\textit{data, h}):
    Encrypt \textit{data} with the TEE-internal sealing key and then store the associated encrypted data at the location identified by handle $h$ in external storage. 
    \item $data \leftarrow$ \UnSeal(\textit{h}):
    Retrieve the sealed data from an external storage with the handle $h$, decrypt it, and return the associated plaintext \textit{data}.
\end{packeditemize}

\iheading{2) TC interfaces.} We use two TC operations:

\begin{packeditemize}
    \item $ctr \leftarrow$ \IncTC($k$): Atomically increment the counter identified by key $k$ by one and return the new value $ctr$.
    \item $ctr \leftarrow$ \ReadTC($k$): Return the current value $ctr$ of the counter identified by key $k$ without modifying it.
\end{packeditemize}

Each node maintains two trusted counters for different recovery purposes:
\begin{packeditemize}
    \item $TC_{md}$ records metadata updates. Its value is bound to the epoch in sealed metadata, enabling rollback detection and safe recovery when the sealed states and the counter mismatch.
    \item $TC_{\mathsf{role}}$ records a node's role within the current epoch. It is initialized to 0 and incremented whenever the node changes its role between follower and leader. Thus, when the counter value is odd, the node is a leader; when it is even, the node is a follower.
\end{packeditemize}

\bheading{Initialization.}
Each node boots from a hard-coded genesis configuration and initializes with default metadata (\ie, $ep = 1$) and an empty log. The configuration information, including key pairs, is stored on local disks in an encrypted and authenticated form.

In this work, we consider a static configuration: a recovering node can retrieve the configuration to obtain its own key pairs and the public keys of other nodes for communication. Dynamic reconfiguration is discussed in \ssecref{sec:conclusion}.

The trusted counters are initialized to match the starting system state: $TC_{md}$ is set to 1 to align with the initial epoch, while $TC_{role}$ is set to 0 to represent the initial follower role. These initializations ensure that metadata freshness and node role tracking start from a consistent, well-defined state.

\subsection{Metadata Recovery Mechanism} \label{subsec:metadataDesign}

\subsubsection{Metadata Update} 
\sysname protects metadata with TC, since metadata is node-local and requires more precise recovery to gain higher availability.
For each metadata update, the node first increments $TC_{md}$ and then seals the metadata together with the counter value, following an \textit{inc-then-store} scheme~\cite{Ariadne} (Fig.~\ref{fig:metadata}a).

Using a TC provides a local freshness check for metadata, but a raw TC is insufficient: if a crash occurs after the counter is incremented but before the updated metadata is sealed, recovery observes a counter mismatch and cannot distinguish between a benign crash and a rollback. \sysname resolves this ambiguity by binding the scalar epoch to $TC_{md}$. The key rule is to advance both values in lockstep: every metadata update increases the epoch by exactly one and performs one $IncTC(md)$. This design turns $TC_{md}$ into a record of epoch transitions. When the counter value matches the sealed metadata, the node can safely restore the full metadata from persistent storage. In the event of a mismatch, \sysname uses the counter value as the recovered epoch. To maintain safety, the node disables voting in this epoch due to uncertainty about its previous votes.

When a node receives a request with a higher target epoch $ep^*$ than its current epoch $ep$, it executes the following loop until it reaches $ep^*$:

\noindent \whiteding{1} Invoke \IncTC(md) to advance the trusted counter $TC_{md}$ by one step and obtain the new value $ctr$.

\noindent \whiteding{2} Update the metadata: increment $md.\textit{epoch}$ by 1 and set $md.\textit{vote}$ according to the request type. For all intermediate epochs, set $md.\textit{vote}$ to non-voting. When $md.\textit{epoch}$ reaches the target $ep^*$, set $md.\textit{vote}$ to the selected candidate if the request is a vote request; otherwise, it remains non-voting.

\noindent \whiteding{3} Persist metadata: call \Seal($md | ctr$, $h_{md}$) to store the updated metadata along with the counter value.

\noindent \whiteding{4} Send vote reply: for vote requests, the node sends the reply when $md.\textit{epoch}$ reaches $ep^*$. The reply is sent after the corresponding metadata has been sealed to ensure consistency.

A multi-epoch jump is executed as a sequence of single-epoch updates. All intermediate epochs are marked non-voting. The final epoch records a candidate only if the triggering request is a granted vote. This binding update keeps $md.\textit{epoch}$ synchronized with $TC_{md}$ and prevents the node from casting a different vote in any previous epoch.

\subsubsection{Metadata Recovery}
As shown in Fig.~\ref{fig:metadata}b, metadata recovery follows the update scheme. The node first checks the freshness of the sealed metadata using $TC_{md}$. If the sealed counter value matches the current counter value, the node recovers from the sealed state. Otherwise, it recovers to a safe epoch without risking double voting.
The binding-update design binds $TC_{md}$ to the current epoch, maintaining a correspondence between the counter value and epoch. This correspondence provides a reliable anchor for safe recovery. The recovery process then proceeds as follows:

\noindent \whiteding{1} Load sealed metadata: the node invokes \UnSeal($h_{md}$) to retrieve $md^{\prime}$ together with the sealed counter value $ctr^{\prime}$.

\noindent \whiteding{2} Read current counter: the node calls \ReadTC(md) to obtain the current counter value $ctr$ for freshness verification.

\noindent \whiteding{3} Check counter match: if $ctr = ctr^{\prime}$, the sealed metadata is fresh. The node sets $md \leftarrow md^{\prime}$ and finishes recovery.

\noindent \whiteding{4} Handle mismatch: if $ctr \neq ctr^{\prime}$, the sealed metadata may be outdated. Then the node starts metadata repair with $ctr$.

\noindent \whiteding{5} Repair metadata: the node reconstructs metadata by setting $md.\textit{epoch} \leftarrow ctr$ and $md.\textit{vote}$ to non-voting. The node then resumes execution using this repaired metadata.

\begin{figure}[t]
    \centering
    \includegraphics[width=0.95\linewidth]{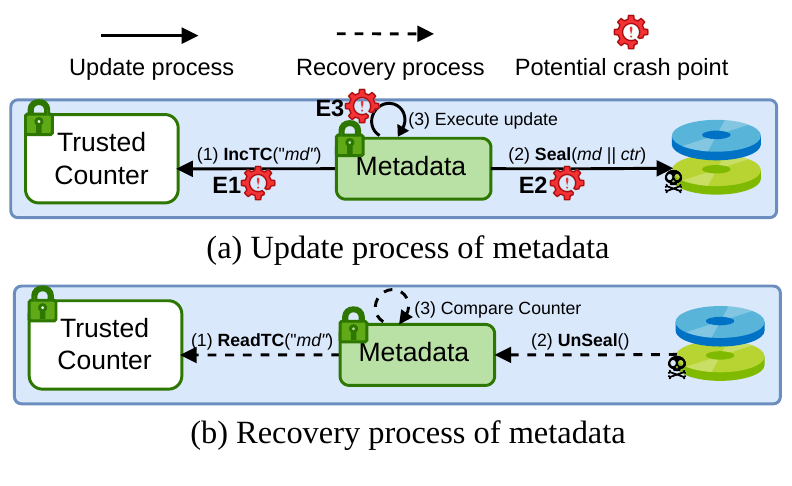}
    \caption{Recovery design of metadata.}
    \label{fig:metadata}
    \vspace{-4mm}
\end{figure}

\subsubsection{Crash Window Analysis}
The inc-then-store update is not atomic. Crashes can occur in three intervals, as illustrated in Fig.~\ref{fig:metadata}a: before incrementing $TC_{md}$ (\textbf{E1}); after incrementing $TC_{md}$ but before sealing the metadata (\textbf{E2}); and after sealing the metadata but before sending the vote reply or completing the triggering operation (\textbf{E3}).
Our analysis shows that each crash falls into one of two cases. In the match case, the sealed metadata can be safely restored. In the mismatch case, the node repairs to a non-voting epoch.

\begin{packeditemize}
\item \textbf{E1 Crash.}  
At \textbf{E1}, the node has prepared an update, but neither the epoch nor the $TC_{md}$ has changed. Since nothing has been written to persistent metadata, there is no mismatch. Recovery can safely restore the previous metadata, and no actions or effects have been produced.

\item \textbf{E2 Crash.}
At \textbf{E2}, $TC_{md}$ has advanced by one epoch, but the corresponding metadata has not yet been sealed. This creates a mismatch between the counter and the sealed metadata. This mismatch is indistinguishable from an actual rollback, rendering detection unreliable. The binding update design resolves this issue by using the counter value as the recovered epoch. Since no sealed metadata is available, the node must disable voting in that epoch to prevent casting or resending votes with incomplete metadata.

\item \textbf{E3 Crash.}
At \textbf{E3}, both $TC_{md}$ and the sealed metadata have been updated.
If no rollback occurred, the counter and sealed metadata match, allowing the node to safely restore the sealed metadata. With the complete metadata, any previously recorded voting behavior can also be safely replayed.
If a rollback occurred, a mismatch between the counter and the metadata will be observed. Recovery then uses the counter value to restore the epoch and marks the node as non-voting.
\end{packeditemize}

\subsection{Log Recovery Mechanism} \label{subsec:logDesign}
A recovering node begins log recovery once metadata recovery has established a safe epoch. Unlike metadata, logs are replicated according to consensus rules, so \sysname can recover them from other replicas. The role counter $TC_{role}$ allows a recovering node to determine its previous role before the crash, and \sysname uses different recovery paths for the two cases.
Alg.~\ref{algo:unified-log-recovery} summarizes both paths.

\begin{algorithm}[t]
\small
\caption{Log recovery at node $p_i$}
\label{algo:unified-log-recovery}
\begin{algorithmic}[1]
\STATE $resp \gets 0$ \COMMENT{the number of \msg{ReplyRecover} messages received}
\STATE $tag_{\max} \gets \bot$ \COMMENT{the most up-to-date last-entry tag observed}
\STATE $src \gets \bot$ \COMMENT{the ID of the node selected as the log source}
\STATE $done \gets \textbf{false}$ \COMMENT{whether follower recovery has completed}
\STATE
\STATE \textbf{upon} $\langle \msg{LogRecover}, h_{log} \rangle$:
\begin{ALC@g}
    \STATE $log_i \gets \UnSeal(h_{log})$
    \STATE $ep_i \gets$ recovered metadata epoch of $p_i$
    \STATE $last_i \gets$ last-entry tag of $log_i$
    \STATE $roleCtr_i \gets \ReadTC(role)$
    \IF{$roleCtr_i$ is even}
        \STATE broadcast $\langle \msg{FollowerRecover}, i, ep_i, last_i, non \rangle$
    \ELSE
        \STATE $ep_i \gets ep_i+1$
        \STATE broadcast $\langle \msg{LeaderRecover}, i, ep_i, last_i, non \rangle$
    \ENDIF
\end{ALC@g}
\STATE
\STATE \textbf{upon receiving} $\langle \msg{FollowerRecover}, j, ep_j, last_j, non \rangle$:
\begin{ALC@g}
    \STATE \textbf{if} $role_i \neq \text{\leader}$ or $ep_i < ep_j$ \textbf{then} \textbf{return} \textbf{end if}
    \STATE $\textit{suffix} \gets$ log entries after $last_j$
    \STATE send $\langle \msg{ReplyLog}, ep_i, \textit{suffix}, i, non \rangle$ to $p_j$
\end{ALC@g}
\STATE
\STATE \textbf{upon receiving} $\langle \msg{ReplyLog}, ep, \textit{suffix}, j, non \rangle$:
\begin{ALC@g}
    \STATE \textbf{if} $non$ is invalid or $done$ or $ep < ep_i$ \textbf{then} \textbf{return} \textbf{end if}
    \STATE append valid $\textit{suffix}$ to $log_i$
    \STATE $done \gets \textbf{true}$
\end{ALC@g}
\STATE
\STATE \textbf{upon receiving} $\langle \msg{LeaderRecover}, j, ep_j, last_j, non \rangle$:
\begin{ALC@g}
    \IF{$ep_i > ep_j$ or ($ep_i = ep_j$ and $vote_i=\textsf{disabled}$)}
        \STATE \textbf{return}
    \ENDIF
    \IF{$ep_i \leq ep_j$}
        \STATE \COMMENT{use the binding update of metadata}
        \STATE $ep_i \gets ep_j$
        \STATE $vote_i \gets$ \textsf disabled 
    \ENDIF
    \STATE $last_i \gets$ last-entry tag of $log_i$
    \STATE send $\langle \msg{ReplyRecover}, ep_i, last_i, i, non \rangle$ to $p_j$
\end{ALC@g}
\STATE
\STATE \textbf{upon receiving} $\langle \msg{ReplyRecover}, ep, last, j, non \rangle$:
\begin{ALC@g}
    \STATE \textbf{if} $non$ is invalid \textbf{then} \textbf{return} \textbf{end if}
    \STATE $resp \gets resp + 1$
    \IF{$last$ is more up-to-date than $tag_{\max}$}
        \STATE $tag_{\max} \gets last$; $src \gets j$
    \ENDIF
    \IF{$resp = f+1$}
        \STATE synchronize log with $p_{src}$ and resume proposing
    \ENDIF
\end{ALC@g}
\end{algorithmic}
\end{algorithm}

\subsubsection{Log Recovery for Followers} 
To ensure correctness, the commit process must be properly executed: every entry considered committed must be durably stored on a quorum of replicas. A recovering follower, therefore, needs to catch up with a safe leader. Generic DCR-style recovery may collect $f{+}1$ replies to identify the latest leader and ensure that entries still being replicated are not lost. In \sysname, metadata recovery already restores a safe epoch for the follower. Therefore, a recovering follower can synchronize its log from any leader whose epoch is no lower than its own.

The follower recovery path leverages this observation to reduce message overhead. First, the follower unseals its local log from persistent storage and determines its last-entry tag. It then broadcasts a \msg{FollowerRecover} message containing its recovered epoch and last-entry tag (Alg.~\ref{algo:unified-log-recovery}, L6–L12). Only the leader whose epoch is no lower than the follower’s epoch responds. The reply contains the log entries after the follower’s last-entry tag. This allows the follower to reuse its locally persisted entries and fetch only the missing suffix (Alg.~\ref{algo:unified-log-recovery}, L18–L26). If the tags do not match, the follower replaces any inconsistent entries with the corresponding entries from the leader’s log.

The follower accepts the first valid leader reply that passes the epoch check. It appends the corresponding suffix and completes recovery. Any subsequent replies are ignored. If no valid leader reply is received, a standard leader election eventually produces a leader under partial synchrony. The follower can then complete the recovery process.

\subsubsection{Log Recovery for the Leader}
The leader recovery path is designed to minimize the system’s unavailability following a leader crash. If the crashed leader waits for a timeout and a subsequent election, the system remains unavailable during that interval. In \sysname, the crashed leader can resume service after recovery.

The leader first checks $TC_{role}$ to confirm that it crashed during a leader phase. It then advances the epoch and broadcasts a \msg{LeaderRecover} message (Alg.~\ref{algo:unified-log-recovery}, L13–L16). A node replies only if it can vote in that epoch. Before sending its reply, the node catches up to the new epoch and records that it has voted for the leader of that epoch. In this way, every follower that contributes to the recovery quorum will not vote for another candidate in the same epoch (Alg.~\ref{algo:unified-log-recovery}, L28–L38).

The recovering leader collects $f{+}1$ replies and selects the log with the most up-to-date last-entry tag. This quorum is sufficient to preserve all committed entries. By quorum intersection, any entry committed before the crash must appear in at least one of the replies. Selecting the log with the latest last-entry tag ensures that all committed entries are included, preventing any loss (Alg.~\ref{algo:unified-log-recovery}, L40–L48).

If the recovering leader resumes in the old epoch without advancing, entries that were not yet committed could cause inconsistencies. Depending on which followers respond, such entries may or may not be included in the $f{+}1$ selected log. If the leader resumes in the old epoch after losing these entries, it could propose conflicting entries at the same index under the same epoch-index tag. Moving to a new epoch eliminates this ambiguity. After synchronization, the leader extends the recovered log in the new epoch, and any unrecoverable, uncommitted suffix from the old epoch is safely discarded.

\subsubsection{Log Recovery Acceleration} 
\sysname further reduces recovery and normal-case overhead through two optimizations enabled by protocol-guided log recovery.

\bheading{Unblocking Log Recovery.}
If the leader remains unchanged, it tracks the next entry index for each follower. For follower $p_i$, this index is denoted as $idx_i$. After unsealing its log from persistent storage, $p_i$ can complete recovery once it has caught up to $idx_i$, without synchronizing with the leader’s latest log. This mechanism accelerates recovery by eliminating unnecessary synchronization.

\bheading{Background Log Persistence.}
\sysname moves log persistence off the commit critical path to reduce synchronous I/O during normal-case replication. Under high load, a node keeps newly appended log entries in memory and flushes them to disk in the background. This design is safe because log recovery relies on the cluster rather than the persistent storage as the sole source. By contrast, some systems require every committed log entry to be durably written to local storage. Such systems cannot use background persistence. Losing unflushed entries in these cases would compromise safety.

\section{Correctness Analysis and Verification}

\subsection{Correctness Analysis}\label{subsec:analysis}

TEE integrity and the underlying CFT consensus protocol together guarantee consensus safety and liveness during normal execution~\cite{castro1999pbft}; our goal is therefore to show that \sysname's recovery procedure preserves the three properties of \ssecref{subsec:problemstatement} in the presence of rollback attempts on sealed storage. We collect primitive assumptions inherited from the TEE and protocol-level invariants maintained by \ssecref{subsec:metadataDesign}--\ssecref{subsec:logDesign}, state three formal theorems capturing those properties, introduce two supporting lemmas, and prove the theorems in dependency order.

By the threat model in \ssecref{subsec:sysmodel}, trusted hardware provides the following primitive guarantees:
\begin{packeditemize}
\item \textbf{A1 (Counter monotonicity).} For any counter we use, the value returned by \ReadTC{} is monotonically non-decreasing across crashes, and $\IncTC{}$ returns a value strictly greater than any value previously returned for the same counter.
\item \textbf{A2 (Seal authenticity).} A successful \UnSeal{} returns a plaintext that was produced inside an enclave with the matching measurement at some earlier time.
\item \textbf{A3 (Enclave integrity).} Code running inside an enclave executes as specified; the adversary can delay or drop enclave messages, but cannot tamper with authenticated messages.
\end{packeditemize}
Each node maintains two trusted counters: $TC_{md}$ for metadata updates and $TC_{role}$ for role tracking. 

\bheading{Protocol Invariants.} The procedures in \ssecref{subsec:metadataDesign}--\ssecref{subsec:logDesign} maintain the following invariants at every correct node. 
\begin{packeditemize}
\item \textbf{I1 (Counter--epoch binding).} After every successful metadata update, the sealed metadata satisfies $md.\textit{epoch}=ctr$, where $ctr$ is the value returned by the corresponding $\IncTC(md)$; the in-memory epoch always equals $md.\textit{epoch}$.
\item \textbf{I2 (Mismatch repair).} During recovery, if $\ReadTC(md)$ differs from the sealed counter value, the node sets its epoch to $\ReadTC(md)$ and disables voting in that epoch.
\item \textbf{I3 (Fresh epoch for leader recovery).}
A recovering leader uses $TC_{role}$ to identify that it crashed in a leader phase.
Before resuming, it advances metadata to a new epoch and obtains at
least $f{+}1$ \msg{ReplyRecover} responses from nodes that were eligible to vote in the new epoch.
\item \textbf{I4 (Safe log selection in leader recovery).} Before a recovering leader proposes in a fresh epoch, it collects \msg{ReplyRecover} responses from at least $f{+}1$ nodes and adopts the responder's log with the most up-to-date last-entry tag $(ep, \textit{lastIdx})$.
\item \textbf{I5 (Log prefix consistency).}
A follower appends a suffix from a leader only if the suffix's predecessor tag $(prevEp, prevIdx)$ matches the follower's local entry at $prevIdx$. If the tags do not match, the follower truncates the conflicting suffix and retries from an earlier matched tag. This rule ensures that log synchronization preserves a common prefix between the leader and follower.
\item \textbf{I6 (Leader append-only).} While acting as a leader, a node only appends at indices strictly greater than its current $\textit{lastIdx}$ and never rewrites an earlier entry of its own log.
\item \textbf{I7 (Up-to-date vote rule).} A valid vote is generated only if the candidate's last-entry tag $(lastEp, lastIdx)$ is lexicographically greater than or equal to the voter's own.
\end{packeditemize}
Invariants I1--I2 follow from the metadata update/recovery procedure of \ssecref{subsec:metadataDesign}; I3--I4 follow from the leader-recovery steps of \ssecref{subsec:logDesign}; I5--I7 are inherited from the underlying leader-based CFT consensus (\eg, Raft~\cite{raft2014}).

\begin{theorem}[Election Safety]\label{thm:election}
In any epoch $ep$, at most one node can act as a leader.
\end{theorem}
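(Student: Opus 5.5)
The plan is to argue by contradiction and reduce everything to a single per-node claim, the \emph{no-double-vote} property. Suppose two distinct nodes $p_a$ and $p_b$ both act as leaders in epoch $ep$. A node can become leader in $ep$ in two ways: by winning a standard election, which needs $f{+}1$ votes, or by leader recovery, which by I3 needs $f{+}1$ \msg{ReplyRecover} responses. Each \msg{ReplyRecover} sender first moves to $ep$ and records that it has voted for the recovering leader (Alg.~\ref{algo:unified-log-recovery}, L28--L38), so it is a vote for that leader in $ep$. In both cases leadership in $ep$ therefore rests on a quorum of $f{+}1$ endorsements. Since $n=2f+1$, any two such quorums intersect. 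At most $f$ nodes are Byzantine, but by A3 their enclaves still run the protocol code, so only the TEE-resident state can be attacked, through rollback. Hence some node $p_k$, running correct enclave code, endorsed both $p_a$ and $p_b$ in $ep$. It remains to show that no enclave endorses two different candidates in the same epoch, even across crashes and rollbacks.

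To prove no-double-vote, I would fix $p_k$ and the first endorsement it emits in $ep$, say for $p_a$. By step \whiteding{4} of the metadata update, and by the binding update used before replying to \msg{LeaderRecover}, this endorsement is sent only after sealing $md$ with $md.\textit{epoch}=ep$, $md.\textit{vote}=p_a$ and counter value $ctr$. By I1, $ctr=ep$.

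I would then case-split on what happens before $p_k$ could endorse $p_b$.
\begin{itemize}
\item \emph{No crash.} The in-memory vote record for $ep$ is $p_a$, and the protocol refuses a second vote in that epoch.
\item \emph{Crash and recovery.} By A1, \ReadTC{} returns at least $ep$. If it returns exactly the sealed counter $ep$ and the unsealed counter matches, A2 gives a genuine sealed state. The only sealed state with counter $ep$ is the one carrying the $ep$ vote, because each \IncTC{} yields a distinct value (A1), and each seal is bound to that value (I1). So the vote is restored. If there is any mismatch, which covers every rollback and every E2 crash, I2 sets the epoch to $\ReadTC(md)\ge ep$ with voting disabled. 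If the counter is strictly larger, $p_k$ is already past $ep$, and the rule of executing only larger target epochs forbids returning to $ep$.
\item \emph{Entering $ep$ after a crash.} Each increment of $TC_{md}$ produces a distinct epoch, so $p_k$ cannot re-enter $ep$ through a fresh update, and the \msg{LeaderRecover} handler rejects $ep_i=ep_j$ with voting disabled.
\end{itemize}

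The step I expect to be the main obstacle is the interaction between recovery-based endorsements and ordinary votes. The \msg{LeaderRecover} handler sets $ep_i\gets ep_j$ directly, possibly jumping several epochs. I must check that this jump goes through the same sealed binding update, with one \IncTC{} per epoch, so that I1 still holds. I must also check that the recovering leader's own epoch, advanced by one at L14, is itself bound to $TC_{md}$, so that it does not campaign in an epoch where it previously voted for someone else. Finally, a candidate counts its own self-vote; this is handled by the same no-double-vote argument applied to the candidate itself.
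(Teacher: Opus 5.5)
Your proposal matches the paper's proof: a contradiction from two intersecting $f{+}1$ quorums (with \msg{ReplyRecover} replies counted as votes), reduced to a per-node no-double-vote claim. The paper isolates that claim as Lemma~\ref{lem:voteunique} and proves it with the same I1/A1/I2 and \msg{LeaderRecover}-eligibility reasoning you sketch. Your crash/rollback case analysis, and the obligations you flag (the handler's epoch jump and the L14 increment going through the binding update, and the candidate's self-vote), are spelled out more carefully than in the paper, which discharges them implicitly through I3 and the algorithm's ``use the binding update of metadata'' comment.
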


\begin{theorem}[Leader Completeness]\label{thm:completeness}
If an entry is committed in epoch $ep$, every leader that assumes leadership in any later epoch $ep' > ep$ contains that entry in its log.
\end{theorem}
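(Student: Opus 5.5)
We prove Theorem~\ref{thm:completeness} by strong induction on the later epoch $ep'$, in the style of the Raft completeness argument, with one extra case for leaders that assume leadership through the recovery path. Fix an entry $en$ committed at index $idx$ in epoch $ep$. By the commit rule, $en$ was durably stored on a set $Q$ of at least $f{+}1$ nodes before it was declared committed. Because log persistence precedes the commit acknowledgment and logs are recovered from the cluster, every node in $Q$ still holds $en$ at $idx$ in any later state. This includes states after crash and rollback: a rolled-back node re-synchronizes only through I5, and I5 never truncates a committed prefix that a valid leader holds. The inductive hypothesis is that every leader of an epoch $e$ with $ep < e < ep'$ contains $en$ at $idx$. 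Together with I5 and I6, this means every log that any leader of such an epoch replicates also contains $en$.

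A node becomes leader of $ep'$ in one of two ways. In the \emph{normal election} case, it collects $f{+}1$ votes. The quorum $Q$ and the voting quorum intersect (both have size $f{+}1$ among $2f{+}1$ nodes), so some voter $v$ holds a log containing $en$. By I7, the candidate's last-entry tag $(lastEp, lastIdx)$ is at least $v$'s. The usual case split then applies. If $lastEp$ equals $v$'s last epoch, the candidate's log is at least as long and shares the prefix by I5, so it contains $en$. If $lastEp$ is greater, the candidate's last entry was created by a leader of epoch $lastEp$, where $ep < lastEp < ep'$ or $lastEp = ep$. That leader contains $en$, by the inductive hypothesis or by Theorem~\ref{thm:election} together with I6. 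Log matching (I5) then transfers $en$ to the candidate.

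In the \emph{leader recovery} case, I3 gives $f{+}1$ \msg{ReplyRecover} responses in the fresh epoch $ep'$, and I4 says the leader adopts the log with the maximal tag. Intersection with $Q$ again yields a responder $r$ holding $en$. The adopted log's tag dominates $r$'s, so the same epoch comparison argument (now without I7, using I4's maximal choice instead) shows the adopted log contains $en$. The leader only appends afterwards (I6), so $en$ stays at $idx$.

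The main obstacle is justifying the intersection step under rollback. We must show that the responder or voter in $Q\cap$quorum really reports a log containing $en$, rather than a stale unsealed log. We would handle this by noting three facts: a rolled-back node's epoch is repaired by I1/I2; its log is only refreshed from a leader of epoch no lower than its own; and the inductive hypothesis applies to that leader. Consequently, any log a member of $Q$ reports in epochs after $ep$ still contains $en$. We would state this as a small auxiliary claim proved within the same induction. Election safety (Theorem~\ref{thm:election}) is invoked to rule out two conflicting leaders in epoch $ep$ itself.
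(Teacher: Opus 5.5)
Your proposal is essentially the paper's own proof. It uses strong induction on the later epoch, split into a normal-election case (intersect the vote quorum with the commit quorum, then apply I7 and I5) and a leader-recovery case (intersect the \msg{ReplyRecover} quorum with the commit quorum, then apply I4 and I5), with I6 keeping $en$ at its index. Your explicit last-epoch case split and your handling of rolled-back members of the commit quorum are more careful than the paper, which simply asserts that the intersecting node's log contains $en$.

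Two caveats, both also left implicit by the paper. First, your auxiliary claim needs one more premise: a node whose unsealed log may be stale must neither vote nor send \msg{ReplyRecover} until its follower recovery has completed. Your three facts only describe its log after re-synchronization, and Algorithm~1's \msg{LeaderRecover} handler has no such check. Second, drop the appeal in your first paragraph to persistence preceding the acknowledgment. Logs carry no trusted counter, so durability gives no protection against rollback, and \sysname's background persistence abandons that ordering anyway.
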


\begin{theorem}[Recovery Liveness]\label{thm:liveness}
Every recovering, uncorrupted node eventually completes its recovery procedure.
\end{theorem}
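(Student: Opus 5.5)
The plan is to split on the recovery path the node takes. Each path is then shown to reach its terminating step (setting $done$, or synchronizing after $resp=f+1$) in finite time after \textsf{GST}, given that at least $f+1$ nodes are correct.

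\textbf{Metadata phase.} First I will show that metadata recovery always terminates locally and never blocks. Steps \whiteding{1}--\whiteding{5} of \ssecref{subsec:metadataDesign} consist of one \UnSeal{}, one \ReadTC{} and a comparison. By A1 and A2 each of these returns, and by I2 the mismatch branch always yields a usable epoch, namely $\ReadTC(md)$ with voting disabled. So unlike the raw inc-then-store pattern, no counter/seal state makes recovery impossible. The node then enters log recovery with a well-defined epoch $ep_i$. I will also note that $ep_i$ is finite and no larger than the highest epoch the node ever entered. Together with I1, this lets correct nodes move into $ep_i$ or beyond by the normal epoch catch-up rule.

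\textbf{Follower path} ($TC_{role}$ even). The node completes as soon as it receives one valid \msg{ReplyLog} from a leader with epoch $\ge ep_i$. I will assume \msg{FollowerRecover} is rebroadcast on timeout, which is needed because messages before \textsf{GST} may be lost. I will then invoke liveness of the underlying CFT protocol after \textsf{GST}: the $f+1$ correct nodes eventually elect a stable leader.

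The point to argue is that this leader's epoch is at least $ep_i$. Any election started after the recovering node's message is delivered runs in an epoch at least as large as any epoch a correct node has seen. Moreover, the recovering node itself can trigger or join an election at $ep_i+1$. So eventually a leader in an epoch $\ge ep_i$ exists, it answers at L18--L21, and by A3 the reply arrives within $\Delta$ and passes the check at L24.

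\textbf{Leader path} ($TC_{role}$ odd). This is the main obstacle. The recovering leader needs $f+1$ \msg{ReplyRecover} messages, but a node refuses at L28--L30 if it is already in a higher epoch, or in the same epoch with voting disabled, for instance because another candidate is competing. Without further care the leader could wait forever.

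I will close this gap with the standard fallback implied by the protocol. If $f+1$ replies do not arrive within a timeout, the node abandons leader recovery and treats itself as a follower: it increments $TC_{role}$ to an even value and takes the follower path, or equivalently joins a normal election. That path terminates by the previous paragraph.

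In the good case after \textsf{GST} with no higher epoch present, every correct node $p_j$ has $ep_j\le ep_i$ (with $ep_i$ already advanced) and satisfies the condition at L31. Each such node replies within $2\Delta$, so the $f+1$ correct nodes, counting the leader itself, suffice. Synchronizing with $p_{src}$ then finishes in bounded time.

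Either way, recovery completes. I expect the delicate part to be showing that the fallback cannot loop forever through repeated \msg{LeaderRecover} attempts, and I plan to rule this out by making leader recovery a one-shot attempt per recovery.
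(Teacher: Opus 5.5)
Your proposal is correct and follows essentially the paper's own argument. Metadata recovery terminates after one \UnSeal{} and one \ReadTC{} (restore on match, apply I2 otherwise); a recovering follower finishes once a leader reachable after \textsf{GST} answers its \msg{FollowerRecover}; and a recovering leader either gathers $f{+}1$ \msg{ReplyRecover} messages and adopts the most up-to-date log, or falls back to the follower path. Your extra details---retransmitting \msg{FollowerRecover}, arguing that the eventual leader's epoch is at least $ep_i$ so both epoch checks pass, and bounding the leader attempt by a one-shot timeout---make explicit what the paper's sketch leaves implicit when it says that otherwise another leader is eventually established.
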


We now state two supporting lemmas and then prove Theorems~\ref{thm:election}--\ref{thm:liveness} in dependency order.

\begin{lemma}[Recovered Epoch Monotonicity]\label{lem:epochmono}
For any correct node, the epoch after any crash recovery is no smaller than its epoch immediately before the crash.
\end{lemma}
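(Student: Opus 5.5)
The plan is to track the quantity $ctr^{\ast}=\ReadTC(md)$ and show that, at every correct node and at every moment, the in-memory epoch equals the current value of $TC_{md}$ whenever the node is not in the middle of an update. The lemma then reduces to counter monotonicity (A1). Concretely, I would first prove the auxiliary claim: \emph{immediately after any completed metadata update or any completed recovery, $ep_i = \ReadTC(md)$}. For updates this is I1, since the loop performs exactly one $\IncTC(md)$ per unit increment of $md.\textit{epoch}$ and seals $md \mid ctr$ with $md.\textit{epoch}=ctr$. For recovery I would split on the outcome of the counter check.

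In the match case, the sealed $ctr'$ equals $\ReadTC(md)$. By A2, the sealed blob was produced by the enclave, hence by a sealing step that satisfied I1, so $md'.\textit{epoch}=ctr'=\ReadTC(md)$. In the mismatch case, I2 sets the epoch to $\ReadTC(md)$ directly. One more source of epoch change must also respect the invariant: the in-protocol catch-up at the \msg{LeaderRecover} handler and the increment in the leader path ($ep_i \gets ep_i+1$). I would argue that these are routed through the binding update of \ssecref{subsec:metadataDesign}, as the comment in Alg.~\ref{algo:unified-log-recovery} indicates, so each unit step is again paired with an $\IncTC(md)$.

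With the claim in hand, let $e$ be the epoch just before the crash, at time $t$, and $e'$ the recovered epoch at time $t'>t$. The crash can occur either outside an update or inside windows E1, E2, or E3 of an update toward $ep^{\ast}$. Outside an update, $e=\ReadTC(md)$ at $t$, and by A1 together with the claim, $e'=\ReadTC(md)$ at $t' \ge e$. Inside an update, the in-memory epoch is advanced only after the matching $\IncTC$ has returned. So the epoch the node acted upon satisfies $e \le \ReadTC(md)$ at $t$, and the same chain applies. If the counter advanced in E2 without sealing, the repaired epoch is strictly larger, which is still consistent with the lemma.

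The main obstacle is ruling out adversarial rollback in the match case. The adversary could supply an old sealed blob $md''\mid ctr''$. Here A1 is essential: a stale blob has $ctr'' < \ReadTC(md)$, so it is classified as a mismatch and never restored. A match therefore certifies that the blob's counter is the current one, and I1 makes that a statement about the epoch. I would also note carefully that several blobs cannot carry the same counter value. This holds because $\IncTC$ returns strictly increasing values (A1) and each value is sealed at most once by correct enclave code (A3).
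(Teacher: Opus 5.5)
Your proposal is correct and follows the paper's proof. In both, I1 ties the pre-crash epoch to the last $\IncTC(md)$ value, A1 gives $\ReadTC(md)$ at recovery at least that value, and the match case (via A2 and I1) and the mismatch case (via I2) both restore an epoch equal to the current counter. You are more careful than the paper about the E1--E3 crash windows and the epoch changes in the leader-recovery path, though your remark that no two sealed blobs share a counter value is not needed, since I1 already forces any authentic matching blob to carry an epoch equal to the counter.
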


\begin{proof}
Let $p_i$ be a correct node whose pre-crash epoch is $ep$. By I1, $ep$ equals the value $ctr$ returned by $p_i$'s last successful $\IncTC(md)$ (with $ep = 1$ if $p_i$ has never completed a metadata update). By A1, $\ReadTC(md)$ at recovery returns some $v \geq ctr = ep$. Recovery then proceeds as follows: in the \emph{match} case, the restored epoch equals the sealed $ctr' = v$ (which by A2 and I1 is authentic) and thus equals $v \geq ep$; otherwise (\emph{mismatch} or \UnSeal{} failure), I2 sets the epoch to $v \geq ep$.
\end{proof}

\begin{lemma}[Vote Uniqueness per Epoch]\label{lem:voteunique}
In each epoch, a node can grant a valid vote to at most one node.
\end{lemma}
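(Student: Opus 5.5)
We argue by contradiction. Suppose a correct node $p_i$ grants valid votes in some epoch $ep$ to two distinct candidates $c_1 \neq c_2$. By A3 the enclave code runs as specified, so each vote is sent only after step \whiteding{4} of the metadata update. Each vote is therefore preceded by a successful $\Seal(md \,|\, ctr, h_{md})$ with $md.\textit{epoch}=ep$ and $md.\textit{vote}$ naming the respective candidate. We treat the \msg{ReplyRecover} path of Alg.~\ref{algo:unified-log-recovery} as a vote too. It also goes through the binding update, which records the vote for $p_j$ before replying and then marks the epoch as used, so the same accounting applies.

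We split into two cases: the two votes occur in the same incarnation of $p_i$, or they are separated by at least one crash and recovery.

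\textbf{Same incarnation.} Within one execution, the in-memory metadata already records the vote for $c_1$ in epoch $ep$ (I1). The update rule only moves the epoch forward: it sets $md.\textit{vote}$ in the target epoch and refuses a second vote there. Likewise, L28--L31 of Alg.~\ref{algo:unified-log-recovery} return early once $vote_i$ is set to disabled at the same epoch. So the second vote cannot be produced.

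\textbf{Across a crash.} This is the main case. The vote for $c_1$ in epoch $ep$ was sealed with counter value $ctr_1 = ep$ by I1. By A1, every later $\ReadTC(md)$ returns $v \geq ep$. On recovery there are two sub-cases.
\begin{itemize}
\item \emph{Match}, $v = ctr'$. By A2 and I1 the unsealed metadata is authentic and satisfies $md.\textit{epoch} = v$. If $v = ep$, the blob sealed with counter $ep$ is the one recording the vote for $c_1$. This needs the fact that each $\IncTC$ value is sealed at most once, since the vote is written in the same seal as its counter. So the node recovers with its vote already cast and cannot vote again in $ep$. If $v > ep$, the node is past epoch $ep$.
\item \emph{Mismatch or unseal failure.} By I2 the node sets its epoch to $v$ and disables voting there. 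If $v = ep$ it cannot vote in $ep$; if $v > ep$ it is past $ep$.
\end{itemize}
By Lemma~\ref{lem:epochmono} and the fact that epochs only increase within an incarnation, the node never returns to a voting state in epoch $ep$. Induction over the sequence of crashes then covers any number of recoveries.

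The main obstacle is the match sub-case with $v = ep$. We must argue that an adversary cannot supply an older sealed blob whose counter also equals $ep$ but whose vote differs or is empty. The plan is to use the lockstep rule. Each $\IncTC(md)$ produces a fresh value (A1), and each value is sealed exactly once, in step \whiteding{3}, with the vote fixed at that moment. So for each counter value there is at most one authentic blob (A2), and it is the one carrying the vote.

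Two further points need checking. First, the multi-epoch jump marks all intermediate epochs non-voting, so none of them can carry a vote that is later overwritten. Second, the epoch increment at L14 of the leader-recovery path goes through the same binding update; we will assume this rather than a bare in-memory increment.
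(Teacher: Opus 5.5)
Your proof is correct and rests on the same mechanism as the paper's. A vote in $ep$ can be granted only at the single binding-update step at which $TC_{md}$, and hence the epoch (I1, A1), first reaches $ep$. Recovery either restores that sealed vote or falls back to a non-voting epoch (I2). A \msg{ReplyRecover} counts as a vote sent only by a node still eligible in that epoch.

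Your additions are useful refinements of the paper's terser vote-type case analysis. These are the incarnation-based case split, the explicit argument that at most one authentic sealed blob exists per counter value (which justifies the paper's ``restores the sealed vote''), and the remark that the increment at L14 must itself go through the binding update. One point needs correcting: the guard in L28--L31 literally tests only $vote_i=\textsf{disabled}$. So the case of a normal vote for $c_1$ followed by a \msg{LeaderRecover} for the same epoch is not excluded by that early return. It is excluded by the eligibility condition of I3, which is what the paper's proof invokes.
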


\begin{proof}
In \sysname, a valid vote is either a normal election vote or a recovery vote carried by a \msg{ReplyRecover} message. 
Fix an epoch $ep$ and a node $p_i$. 
In a normal election, $p_i$ can vote for $ep$ only when it first advances its epoch to $ep$. I1 and A1 ensure that $p_i$ has at most one voting opportunity in $ep$.
Upon recovery, $p_i$ either restores the sealed vote or disables voting for the recovered epoch by I2. 
Thus, $p_i$ cannot cast another vote in $ep$. 
In leader recovery, sending a \msg{ReplyRecover} counts as a recovery vote. By I3, $p_i$ sends this message only if it is still eligible to vote in $ep$, and it then becomes non-voting. Therefore, $p_i$ can grant a valid vote to at most one node in epoch $ep$.
\end{proof}

\begin{proof}[Proof of Theorem~\ref{thm:election}]
Assume, for contradiction, that two distinct nodes $p_a \neq p_b$ act as leaders in the same epoch $ep$. Let $Q_a$ and $Q_b$ be their vote quorums, where a recovery quorum of \msg{ReplyRecover} messages is treated as a vote quorum. Then $|Q_a| \geq f+1$ and $|Q_b| \geq f+1$. Since $n=2f+1$, we have $Q_a \cap Q_b \neq \emptyset$. Let $p_c \in Q_a \cap Q_b$. Then $p_c$ grants a valid vote to both $p_a$ and $p_b$ in epoch $ep$, contradicting Lemma~\ref{lem:voteunique}.
\end{proof}

The following technical lemma underlies the proof of Leader Completeness.

\begin{lemma}[Leader Append-Only Preservation]\label{lem:leaderholds}
If $L_T$ becomes the leader with entry $en$ at index $k$ in its log, then $L_T$ preserves $en$ at index $k$ throughout its leadership. Moreover, any new entry proposed by $L_T$ is appended after index $k$.
\end{lemma}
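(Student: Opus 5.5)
The plan is to argue by induction over the sequence of log-modifying events that $L_T$ performs while it is leader in its epoch. The state to track is $L_T$'s own log during that leadership interval. That interval begins at the moment $L_T$ becomes leader, either by winning a normal election or by completing leader recovery with $f{+}1$ \msg{ReplyRecover} messages. It ends when $L_T$ steps down, crashes, or moves to a higher epoch.

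The invariant to carry is: ``$log_{L_T}[k]=en$ and every index written so far during this leadership is $>k$.'' At the start of leadership the first clause holds by hypothesis and the second holds vacuously.

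For the inductive step, I will enumerate the operations a leader can apply to its own log. There are three.
\begin{enumerate}[label=(\roman*)]
\item \emph{Appending a newly proposed entry.} By I6, the leader writes only at an index strictly greater than its current $\textit{lastIdx}$. Since $en$ is in the log, $\textit{lastIdx}\ge k$, so the new index is $>k$ and $en$ is untouched.
\item \emph{Receiving and acknowledging replication messages.} These modify followers' logs, not the leader's. I5 truncation is a follower-side rule, so it cannot rewrite $L_T$'s entry at $k$.
\item \emph{The log synchronization step inside leader recovery} (Alg.~\ref{algo:unified-log-recovery}, L40--L48).
\end{enumerate}

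Case (iii) needs care. I will define ``becomes the leader'' in the recovery path as the point \emph{after} synchronization with $p_{src}$ has completed and proposing resumes, as I4 states. The log referred to in the hypothesis is then the post-synchronization log, and synchronization is not an operation performed during the leadership. This definitional alignment is the main obstacle. If instead one insisted that leadership starts at the \msg{LeaderRecover} broadcast, the claim could fail, because the adopted log may overwrite an uncommitted suffix. I will state explicitly that the lemma applies to the log held at the moment leadership is assumed, which is exactly the log used later in the Leader Completeness argument.

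I also need to exclude other paths that could overwrite $L_T$'s log while it still acts as leader in the same epoch. A crash followed by recovery ends the leadership interval, because recovery moves $L_T$ to $ep+1$ by I3. By Lemma~\ref{lem:epochmono}, it can never act again in an epoch $\le ep$. Accepting a \msg{ReplyLog} suffix happens only on the follower path. By A3 the enclave runs the specified code, so a leader does not execute follower append logic. A rolled-back unseal cannot occur mid-leadership without a crash, which again ends the interval.

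With these cases covered, induction over the finite sequence of leader events gives both clauses of the invariant at every point of the leadership. The first clause is that $en$ remains at index $k$. The second is the ``moreover'' part: every newly proposed entry lands after index $k$.
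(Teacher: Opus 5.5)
Your proof is correct and rests on the same key fact as the paper's proof, namely I6: once $L_T$ holds $en$ at $k$ its $\textit{lastIdx}\ge k$, and a leader only appends beyond $\textit{lastIdx}$ and never rewrites its own earlier entries. Your event-by-event case analysis, and your explicit choice to start leadership on the recovery path only after synchronization with $p_{src}$, make precise what the paper's one-line argument leaves implicit; that choice also matches how the lemma is applied to the adopted log in the proof of Theorem~\ref{thm:completeness}.
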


\begin{proof}
When $L_T$ becomes the leader, its last log index is at least $k$. By I6, $L_T$ only appends new entries after its current last log index and never rewrites earlier entries while it remains the leader. Therefore, $en$ remains at index $k$, and every new entry proposed by $L_T$ is appended after index $k$.
\end{proof}

\begin{proof}[Proof of Theorem~\ref{thm:completeness}]
Let $en$ be committed at index $k$ in epoch $ep$ by leader $L$. By the commit rule, $en$ is replicated to a quorum $Q_c$ with $|Q_c|\ge f+1$, each storing $en$ at $(ep,k)$.

We prove by strong induction on $T > ep$ that every leader $L_T$ in epoch $T$ holds $en$ at index $k$ throughout its tenure.

\emph{Base case ($T=ep+1$).} Consider the first leader $L_{ep+1}$ after $en$ is committed. By the quorum intersection argument, $L_{ep+1}$ adopts a log containing the committed prefix including $en$ at $k$. Lemma~\ref{lem:leaderholds} then ensures $en$ remains at $k$.

\emph{Induction hypothesis.} Assume that every leader $L_{T'}$ for $ep < T' \le T$ contains $en$ at $k$ throughout its tenure.

\emph{Induction step ($T \to T+1$).} Consider leader $L_{T+1}$.  

- If $L_{T+1}$ assumes leadership via a normal election, let $Q_v$ be its vote quorum. Since $|Q_v|,|Q_c|\ge f+1$, they intersect at some $r$. By induction hypothesis and I5, $r$'s log contains $en$ at $k$. By I7, $r$ votes only if $L_{T+1}$'s last-entry tag $\succeq r$'s, so I5 ensures $L_{T+1}$'s log contains $en$.

- If $L_{T+1}$ assumes leadership via recovery, let $Q_r$ be the recovery-quorum with $Q_r\cap Q_c\neq \emptyset$ and $r$ in it; by I4, $L_{T+1}$ adopts the log with largest last-entry tag $\ge r$'s log, which I5 ensures preserves the committed prefix with $en$ at $k$.

Thus, by induction, every leader in epoch $T+1$ contains $en$ at index $k$ before proposing, completing the proof.
\end{proof}

\begin{proof}[Proof of Theorem~\ref{thm:liveness}]
Consider a correct node $p_i$ beginning recovery.

\emph{(1) Metadata recovery.} $p_i$ invokes \UnSeal{} and $\ReadTC(md)$, both $O(1)$ TEE operations. On match, it restores $md'$ (I1); on mismatch or \UnSeal{} failure, it applies I2. In either case, this phase terminates.

\emph{(2) Log recovery.} If $p_i$ is a follower, after \textsf{GST} a current or newly elected leader is reachable within bounded time. The leader replies to \msg{FollowerRecover} with the missing suffix, which $p_i$ appends to complete recovery. If $p_i$ is a recovering leader, it broadcasts \msg{LeaderRecover}; after \textsf{GST}, either $f+1$ nodes in the fresh epoch respond with \msg{ReplyRecover}, in which case I4 lets $p_i$ adopt the most up-to-date log and resume proposing, or another leader is eventually established and $p_i$ recovers via the follower path. A nonce in recovery messages prevents responses from being replayed.

Each phase completes in bounded time after \textsf{GST}, so $p_i$ eventually rejoins the system.
\end{proof}

\subsection{Formal Verification.}
We develop a Maude~\cite{Maude} specification of \sysname atop Braft and model check it against both safety and liveness guarantees expressed in linear temporal logic (see Appendix~\ref{appen:maude} for details). We chose Maude as it is a well-established formal specification language and analysis framework that has been successfully applied to a broad range of distributed and networked systems~\cite{bobba2018survivability,10.1145/3603269.3604870,DBLP:series/isc/Chuat22,DBLP:journals/pacmpl/LiuMOZB22}. Within the explored bound, the model checker reports no counterexample to the three recovery properties and protocol liveness.

\section{Implementation} \label{sec:implementation}
We implement \sysname on top of Braft~\cite{braft} and ZooKeeper~\cite{ZookeeperApache}, which use the leader-based CFT consensus protocols Raft and Zab, respectively.\footnote{
Source code is at \url{https://github.com/Artifacts2026/CHIMERA}.} We refer to these implementations as \sysname-B and \sysname-Z. For TEE support, we port \sysname to VM–based TEEs, \ie, Intel TDX, and apply additional optimizations to mitigate the performance overhead of TEEs. (The rationale for choosing Braft, ZooKeeper, and TDX is discussed in~\ssecref{sec:intro}.)

We adopt the software-based counter TIKS~\cite{engraft} for metadata recovery. Moreover, we use Narrator-Pro~\cite{narrator-pro} as the
trusted counter to safely update the counter within a single round of communication.
For sealing functionality, we invoke Intel SDK~\cite{intelSDKgithub} to retrieve the measurement ($mr\_td$) of TDX, and derive a stable sealing key from it.
This key is used to encrypt and decrypt data for the persistent storage outside TEEs. 

\bheading{Implementation atop Braft.} We list required modifications on Braft to realize \sysname:  
\begin{packeditemize}
    \item \textbf{Metadata.} Raft’s metadata consists of two fields: \currentTerm and \votedFor (\ssecref{subsec:character}). In Braft, these fields are combined into a single on-disk structure called \raftMeta. In \sysname, we attach a counter value to each update of \raftMeta to support precise recovery.
 
    \item \textbf{Log.} In \sysname-B, new log entries are first buffered in memory. Their persistence is handled by a background thread that runs during idle periods. This design reduces critical-path latency and allows higher throughput compared to Braft (\ssecref{sec:normalEvaluation}). For recovery, we reuse Braft’s existing \textit{replicator} component to transmit log entries to recovering nodes.
\end{packeditemize}

\bheading{Implementation atop ZooKeeper.} 
Modifications on ZooKeeper, including the log, are similar to those on Braft. 
Thus, we only list the differences: 

\begin{packeditemize}
    \item \bheading{Metadata.} Unlike Raft’s single-phase election, the Zab protocol uses a two-phase process: discovery and synchronization. In this process, a prospective leader is first nominated and is only promoted after receiving acknowledgments from a majority of followers~\cite{zab}. Specifically, after Fast Leader Election (FLE), the newly elected leader enters the discovery phase. It proposes a fresh epoch $ep^*$, which is greater than any previously observed. Each follower persists this value as \acceptedEpoch and returns an acknowledgment to the leader.

    In Zab, followers do not need to record which node they voted for. Election safety is ensured instead by two metadata fields: \acceptedEpoch and \currentEpoch. Together, these fields guarantee that at most one leader is recognized by the majority. Because the two fields are updated under different conditions, we use separate counters to protect them independently.
\end{packeditemize}

\begin{figure*}[t]
    \centering
    \subfloat[Normal case, WAN]{%
        \includegraphics[width=0.24\linewidth]{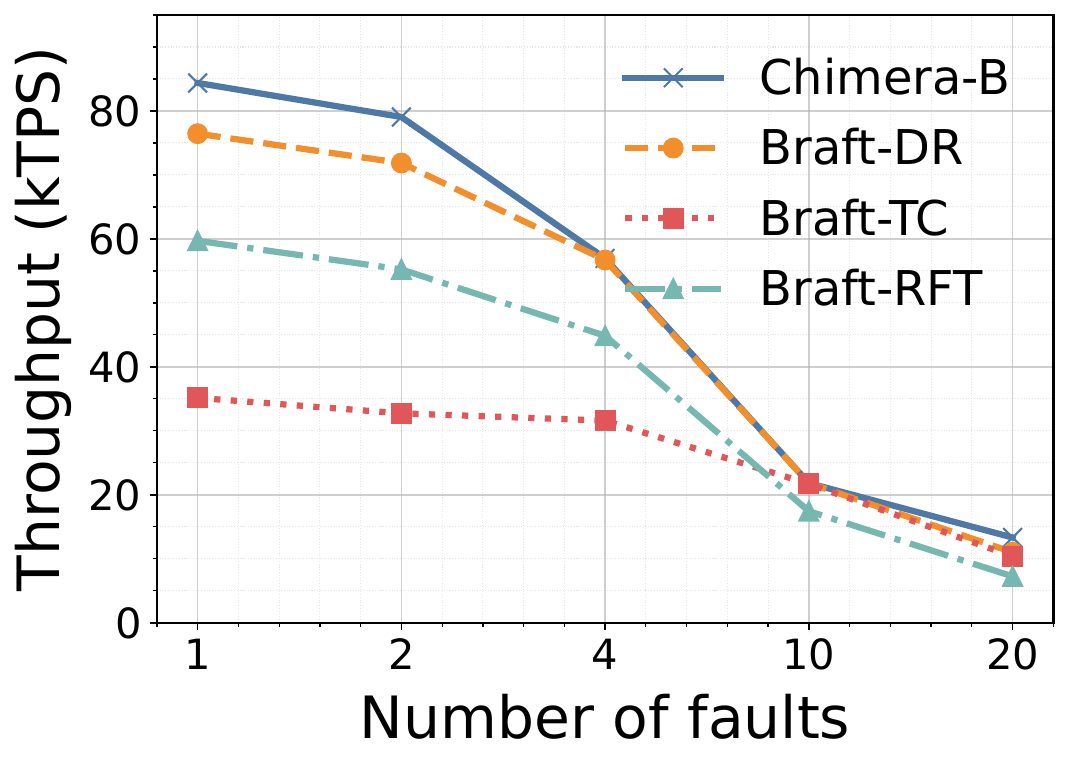}%
        \label{fig:t-wan-node}%
    }\hfill
    \subfloat[Normal case, WAN]{%
        \includegraphics[width=0.24\linewidth]{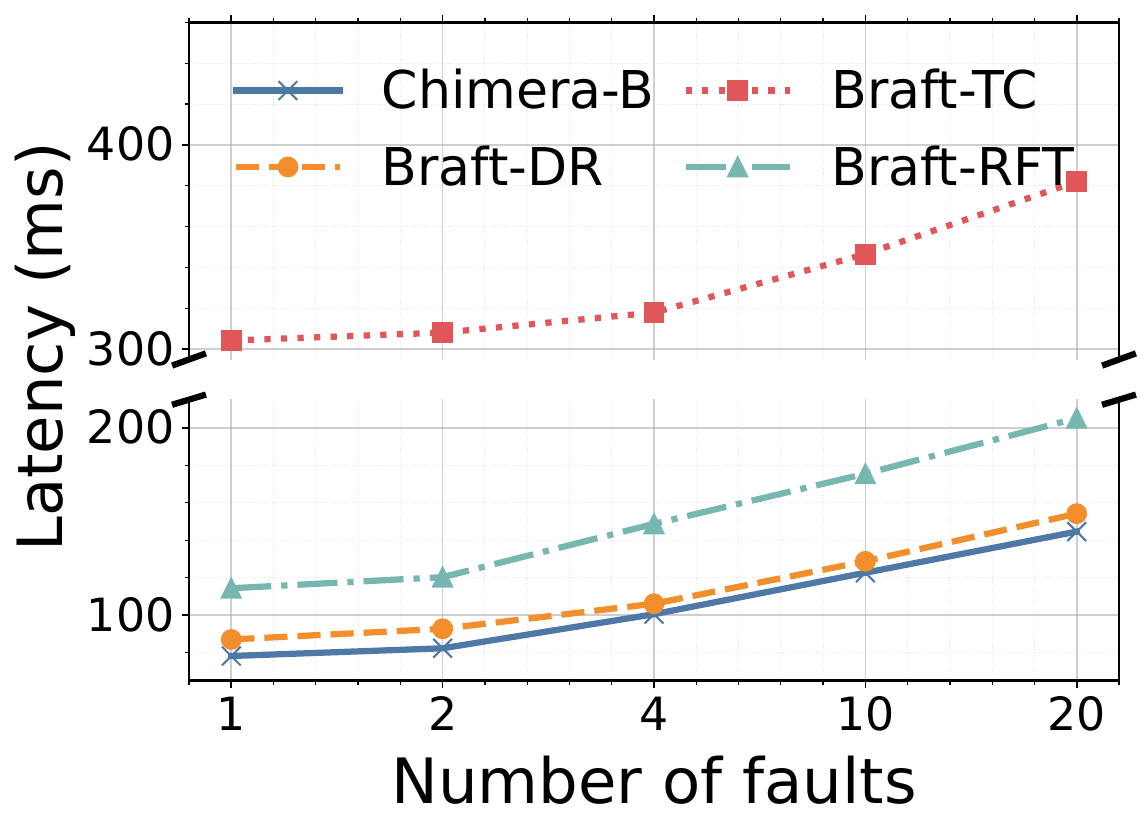}%
        \label{fig:l-wan-node}%
    }\hfill
    \subfloat[Normal case, LAN]{%
        \includegraphics[width=0.24\linewidth]{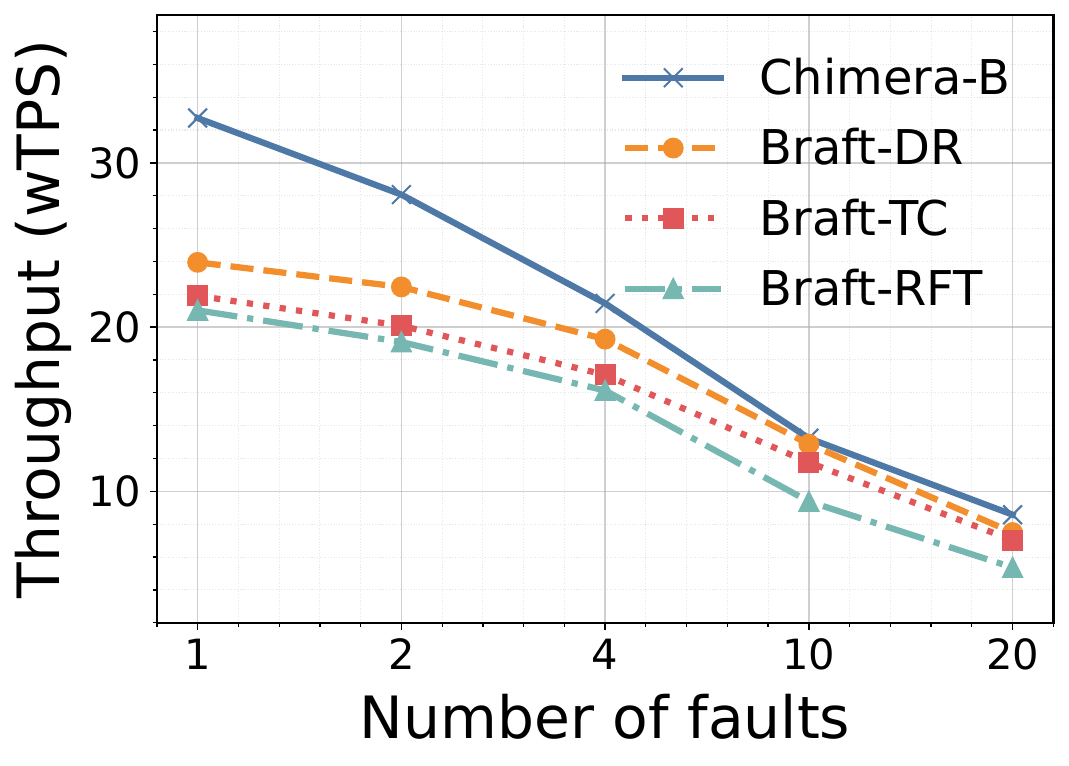}%
        \label{fig:t-lan-node}%
    }\hfill
    \subfloat[Normal case, LAN]{%
        \includegraphics[width=0.24\linewidth]{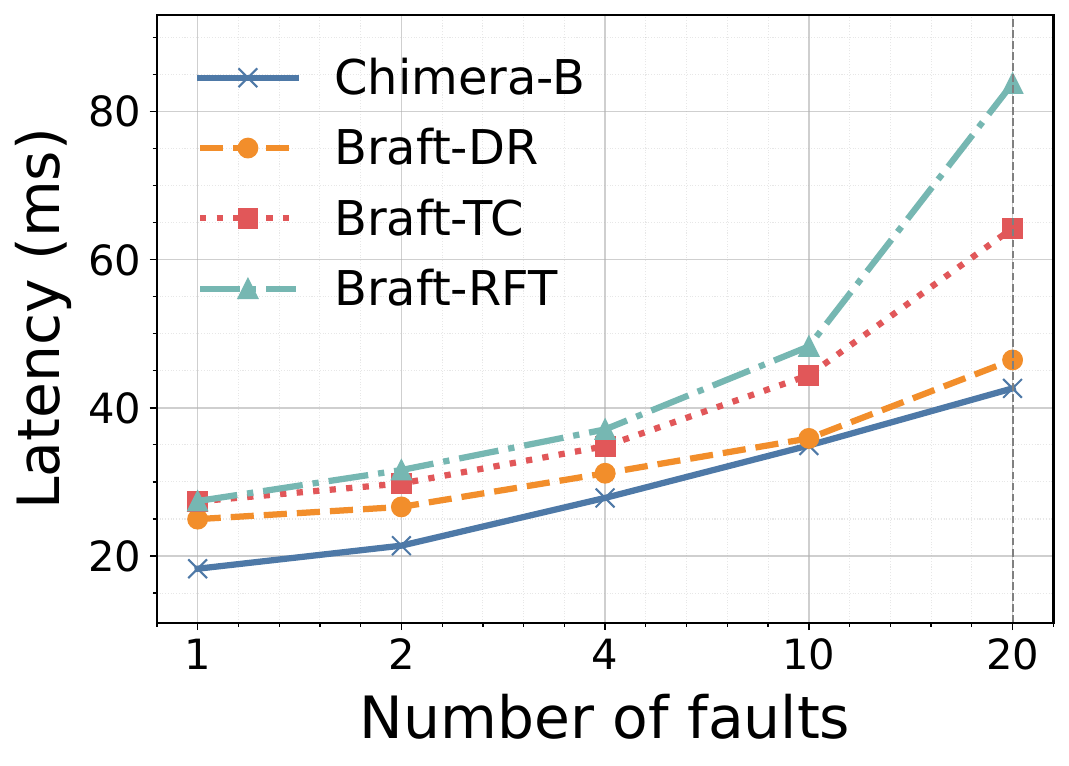}%
        \label{fig:l-lan-node}%
    }
    \caption{Throughput and latency comparisons with varying nodes in WAN and LAN.}
    \label{fig:wan}
    \vspace{-4mm}
\end{figure*}

\section{Evaluation} \label{sec:exp}
We evaluate the performance of \sysname-B (built on Braft~\cite{braft}) and \sysname-Z (built on ZooKeeper~\cite{ZookeeperApache}). Since ZooKeeper comprises additional components beyond consensus and its performance is influenced by multiple factors, we focus our main evaluation on \sysname-B, especially in comparison with its counterparts, while providing an overhead profiling of \sysname-Z.  
We consider four recovery taxonomies (\ssecref{sec:taxonomy}) and adopt them on Braft as baselines. For consistency, all protocols running inside TDX perform disk I/O through sealing.

\begin{packeditemize}
    \item \textbf{Braft-TC} 
    adopts a software-based counter to protect state updates as Engraft~\cite{engraft}. Each counter increment requires two rounds of broadcast to advance the state.

    \item \textbf{Braft-RFT} 
    follows FlexiBFT~\cite{gai2021dissecting} by increasing the node size from $2f+1$ to $3f+1$ and employs a TC only at the leader to prevent equivocating proposals.
 
    \item \textbf{Braft-RC} 
    adopts the original reconfiguration design of Raft, implemented through joint consensus~\cite{raft2014}.
    
    \item \textbf{Braft-DCR} 
    follows Achilles~\cite{achilles} to ask a recovering node to skip two epochs for safety. In Braft, the Pre-Vote mechanism bounds the growth of epoch, making DCR feasible. 
\end{packeditemize}
In addition to the four baselines, we introduce Braft-Direct Recovery (Braft-DR) to evaluate the overhead of rollback resilient solutions. In Braft-DR, a node restores its state directly from sealed data on untrusted storage.

We evaluate \sysname-B against all baselines under both fault-free and faulty scenarios to answer three questions:

\begin{packeditemize}
    \item[\textbf{Q1:}] How does \sysname perform with varying nodes in WAN and LAN compared to its counterparts?  (\ssecref{sec:normalEvaluation})
    
    \item[\textbf{Q2:}] What is the performance of \sysname's recovery protocol, where do the primary bottlenecks lie, and how does it compare to prior approaches? (\ssecref{sec:recoveryEvaluation})
    
    \item[\textbf{Q3:}]
    How much overhead do TEE-related operations introduce,  
  and how effective are our optimizations? (\ssecref{sec:teeEvaluation})
\end{packeditemize}

\subsection{Experimental Setup}
We conducted all experiments on a public cloud platform using up to 61 Intel TDX-enabled instances, with one instance per node. Each node ran on a dedicated virtual machine provisioned with 4 vCPUs and 16 GB of RAM, running Linux kernel~5.10 LTS (64-bit).

We evaluate \sysname under two deployment scenarios: Local Area Network (LAN) and Wide Area Network (WAN). Both are configured using the Linux \texttt{tc} tool for traffic shaping. In the LAN setting, the per-node bandwidth is limited to 10 Gbps, and the inter-node RTT stays below 1 ms. In the WAN setting, we emulate wide-area conditions by limiting bandwidth to 1 Gbps per node and enforcing a 40 ms round-trip latency with $\pm$4 ms jitter. Note that WAN evaluation is performed in emulation because TEE-enabled instances are restricted to specific cloud regions; this reflects deployment constraints rather than a design flaw.

\bheading{Parameters and Metrics.} 
We vary the fault-tolerance parameter $f \in \{1, 2, 4, 10, 20\}$. The leader processes client requests in batches of 256, with each log entry containing a 256~B payload. To balance memory consumption and network utilization, the system bounds the number of in-flight log entries at 5120. 
In addition to \textit{throughput} and \textit{latency}, we measure  \textit{recovery time}, defined as the interval from when a faulty node restarts until it fully recovers and resumes service.

\subsection{Fault-Free Performance} \label{sec:normalEvaluation}
This section evaluates \sysname-B's performance of log commitment in the fault-free scenarios with no failures or leader changes. We compare \sysname-B with Braft-TC, Braft-RFT, and Braft-DR. As Braft-DCR and Braft-RC introduce no additional overhead in normal-case operations compared to Braft-DR, their results are not presented in detail. Due to space constraints, we present the throughput–latency scalability of \sysname-B in Appendix~\ref{appen:addExp}.

\subsubsection{Performance in WAN}
We evaluate throughput and latency under a WAN deployment while varying the fault threshold $f$ (Fig.~\ref{fig:t-wan-node} and \ref{fig:l-wan-node}). In this setting, communication overhead and bandwidth limits dominate performance.

During each transaction, Braft-TC requires two additional communication rounds to interact with its TC for securely recording log changes. According to Raft's replication semantics, this results in a total of five communication rounds to complete a transaction, incurring substantial latency. Braft-RFT, by contrast, adopts a $3f+1$ configuration, which, for the same fault-tolerance level $f$, entails more nodes and thus higher processing overhead. Moreover, an additional counter update during leader persistence incurs one more communication round, which further degrades WAN performance.

Both \sysname-B and Braft-DR operate with $2f+1$ nodes and incur no additional normal-case overhead, which explains their relatively higher throughput in this environment. Nevertheless, 
\sysname-B's recovery mechanism does not rely on the completeness of on-disk logs, enabling it to safely defer disk writes to background operations and thus reduce normal-case commit latency. This design yields an average throughput improvement of approximately 10\% over Braft-DR; however, as communication dominates in the WAN scenario, the advantage from reduced I/O latency is less pronounced.

\subsubsection{Performance in LAN} 
We also evaluate the throughput and latency of \sysname in a LAN deployment to minimize the effect of network communication (Fig.~\ref{fig:t-lan-node} and~\ref{fig:l-lan-node}).
As the network communication cost is negligible in a LAN environment, local processing and I/O overhead become the dominant factors affecting performance. 

\sysname-B exhibits a pronounced throughput advantage over all other variants, particularly at low fault tolerance levels, due to its normal-case optimization that defers durable disk writes to background operations (as explained above). 
When $f = 1$, this optimization leads to a 68\% improvement over Braft-DR that follows the original Braft I/O semantics of synchronous persistence. 

Braft-TC incurs further penalties from its additional communication and trusted counter operations, though in the LAN setting, these penalties are partially masked by the low RTT.
Braft-RFT, with its $3f+1$ configuration, experiences a more significant throughput drop as fault increases, due to the larger quorum size and corresponding processing and message handling overhead. 
Overall, \sysname-B shows superior performance when network delays are not the primary bottleneck and storage-layer optimizations directly translate into substantial end-to-end performance gains.

\begin{table}[t]
    \centering
    \caption{Recovery overhead. 
    }
    \label{tab:log-overhead}
    \scalebox{0.9}{
    \begin{tabular}{@{}lccccc@{}}
        \toprule[1pt]
        Cost (s) & Braft-DR & Braft-TC & Braft-RC & Braft-DCR & \textbf{\sysname-B} \\
        \midrule
        Prep.  & N/A   &  0.01 &   0.01   & N/A     & \textbf{0.01} \\
        Sync. & 6.15    & 6.31    & 117.01 & 24.53 & \textbf{6.42 $(^{*} 18)$} \\
        Quie. & N/A    & N/A    & N/A & 2 epochs & \textbf{N/A} \\
        \hline
        Total    & 6.15  & 6.32  & 117.02 & 24.53 + 2 epochs & \textbf{6.43 $(^{*} 18.01)$} \\
        \bottomrule[1pt]
    \end{tabular}}
    \vspace{-4mm}
\end{table}

\subsection{Performance under Faults} \label{sec:recoveryEvaluation}
We evaluate \sysname-B against Braft-DR, Braft-TC, Braft-RC, and Braft-DCR to measure rollback-resilient recovery overhead under faults. Braft-RFT is omitted, as its recovery is identical to Braft-DR. We deploy 21 nodes in a LAN deployment and simulate failures by shutting down and restarting 10 nodes while continuously issuing client requests. Each node is equipped with 210 MB/s sequential read/write throughput and preloaded with 5 GB of log entries to emulate a large-scale deployment.

\bheading{Single-Node Recovery Latency.} Table~\ref{tab:log-overhead} presents the time for a recovering node to rejoin the protocol. To enable fair comparison, we divide the recovery process into three stages: preparation, synchronization, and quiescence, as below.

\begin{packeditemize}
    \item \textit{Preparation.} This stage includes the steps required to initialize state synchronization. In Braft-TC and \sysname-B, the node reads the trusted counter, which takes about 10 ms. In Braft-RC, the recovering node performs reconfiguration through two consensus rounds, also taking around 10 ms. The latency of RC depends on inter-node message delays.

    \item \textit{Synchronization.} This stage involves loading metadata and log entries, with the latter dominating the cost. In Braft-RC, the node must reconstruct its state from scratch, which takes about 110–120 seconds for 5 GB of data. For data-intensive applications such as blockchains, where the state can reach several terabytes (\eg, Bitcoin~\cite{nakamoto2012bitcoin}), the recovery time under RC can extend to hours or even days. 
    
    Braft-DCR's recovery requires disk loading ($\approx 6$s) and network synchronization ($\approx 18$s). By contrast, \sysname leverages its optimization to safely recover without network catch-up, which takes about 6.42s. Without the optimization, this stage takes about 18s. 

    \item \textit{Quiescence.} This stage ensures safety during rejoining. Braft-DCR requires a node to skip two epochs, which in practical deployments can range from tens of seconds to hours or even days.
\end{packeditemize}

\begin{figure}[t]
    \centering
    \includegraphics[width=0.9\linewidth]{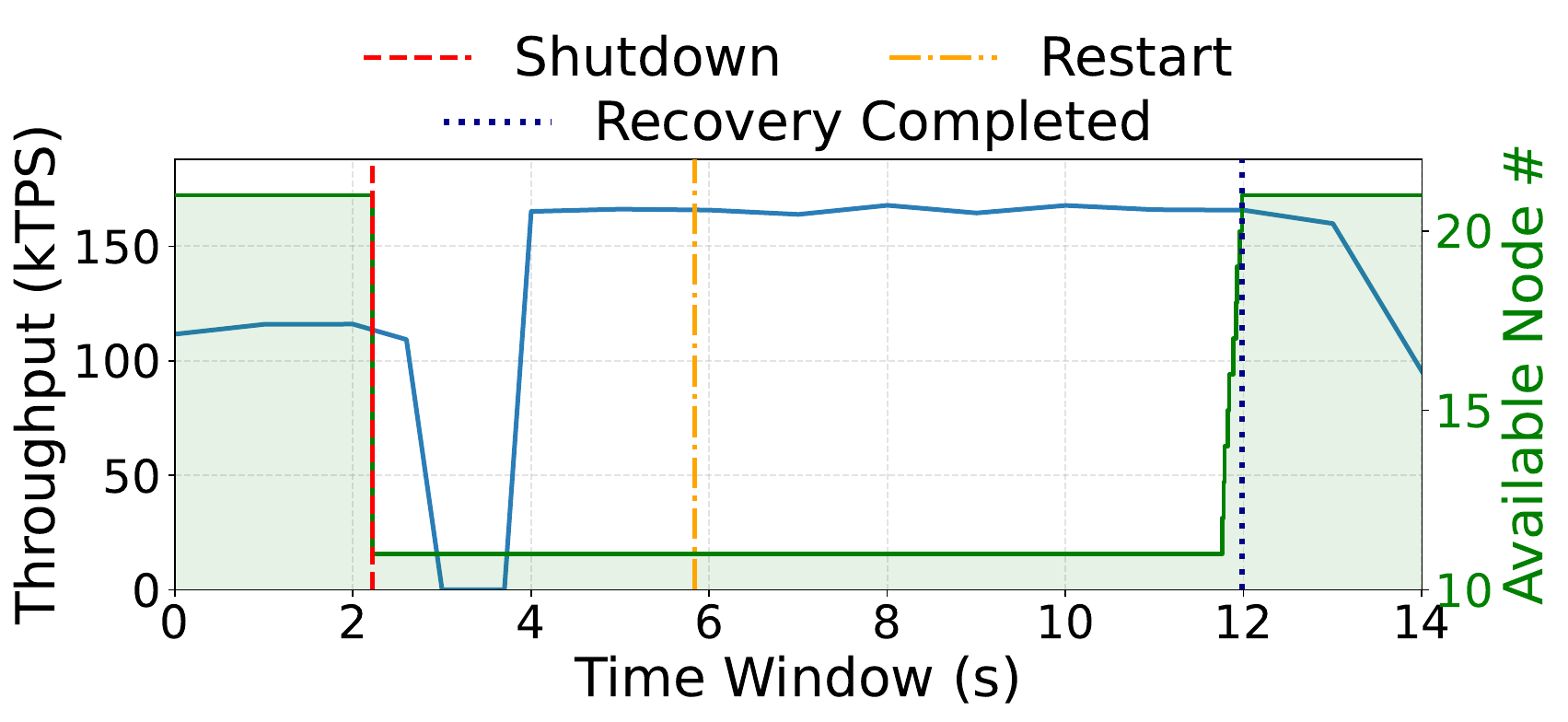} 
    \caption{The fault recovery process of \sysname-B.}
    \label{fig:fault}
    \vspace{-4mm}
\end{figure}

\bheading{Throughput under Recovering Faults.} We evaluate the system-level impact of recovery in terms of system throughput and available nodes, focusing on \sysname-B. Since Braft-DR, Braft-DCR (without quiescence stage), and Braft-TC exhibit recovery latencies comparable to \sysname-B, their effects are effectively captured by \sysname-B and are omitted here. For completeness, we defer the results for Braft-RC to Appendix~\ref{appen:addExp} due to space constraints.

Fig.~\ref{fig:fault} shows the throughput variation and the number of available nodes (\ie, who participate in consensus) of \sysname-B during faulty nodes' recovery. 
At the shutdown point (\ie, 2.2 seconds), the throughput briefly drops to zero as the leader handles connection failures. Once stabilized, the throughput surpasses the pre-failure baseline. This occurs because, although the quorum size is unchanged, the leader no longer replicates log entries to the failed nodes. Thus, the leader's available bandwidth is redistributed to the remaining nodes, reducing contention and increasing throughput. 

During recovery, faulty nodes reload their logs from disk, without affecting ongoing throughput. However, once recovery completes, these nodes lag behind because the leader continues to serve client requests. To rejoin replication, they must first catch up, consuming bandwidth and temporarily reducing throughput until synchronization finishes.

\subsection{Overhead Profiling} \label{sec:teeEvaluation}
To better understand the overhead of TEE-related operations, we compare variants of \sysname-B and \sysname-Z.

\begin{packeditemize}
    \item \textit{NoTEE.} It runs outside Intel TDX, serving as a baseline to measure TEE-related overhead.

    \item \textit{SyncWrite.} It uses synchronous disk writes instead of asynchronous persistence, isolating the cost of log persistence.
    
    \item \textit{NoEncrypt.} It disables memory encryption atop SyncWrite, revealing the overhead of cryptographic operations.
\end{packeditemize}

\bheading{\sysname-B.} \figref{fig:suba-breakdown} shows the throughput of \sysname-B and its variants as the number of faults ($f$) increases in a LAN setting, with all other parameters identical to the fault-free scenarios. Compared to \textit{NoTEE} variant, \sysname-B shows an 8–10\% slowdown, capturing the inherent cost of TEE execution. Relative to \textit{SyncWrite} variant, persistence alone adds roughly 15–20\% overhead, independent of encryption. Finally, the difference between the \textit{SyncWrite} and \textit{NoEncrypt} variants quantifies the cost of cryptographic sealing, resulting in an additional 10–15\% performance degradation.

\bheading{\sysname-Z.} \figref{fig:subb-breakdown} presents the maximum throughput of \sysname-Z and its variants. While \sysname-Z exhibits a similar trend, the performance gaps are smaller. This is because the additional components and coordination overhead in ZooKeeper limit peak throughput,  masking much of the relative impact of TEE execution and persistence operations.

\begin{figure}[t]
  \centering
  \subfloat[\sysname-B breakdown]{%
      \includegraphics[width=0.48\columnwidth]{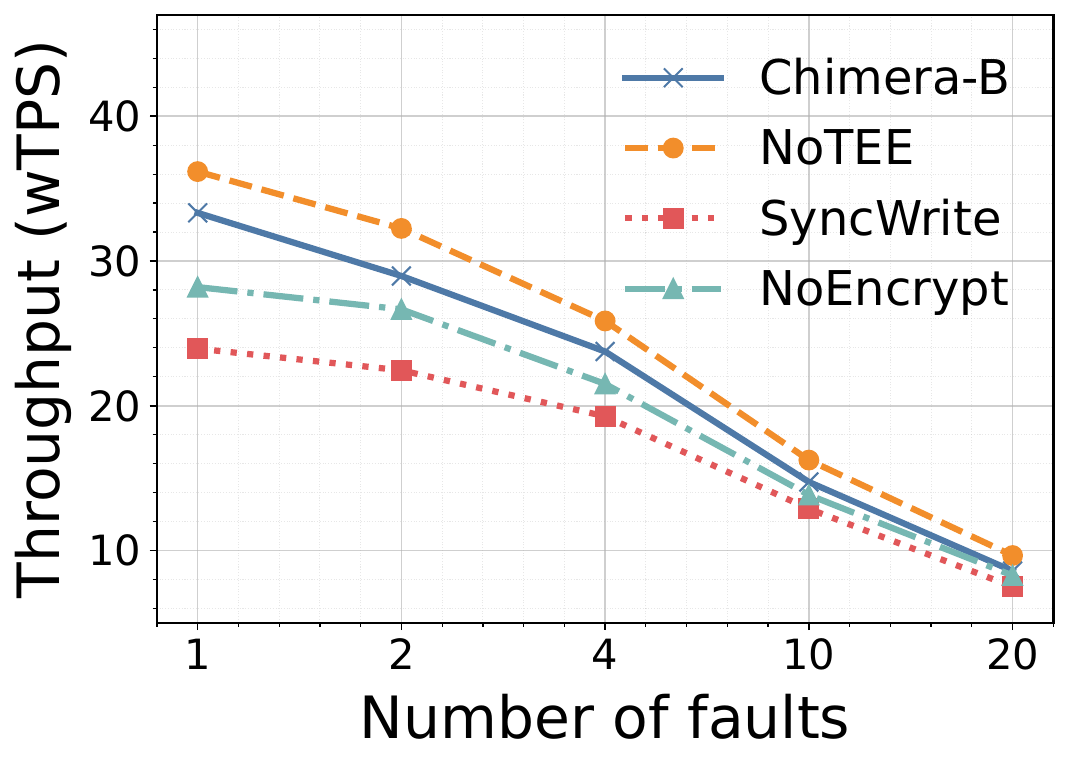}%
      \label{fig:suba-breakdown}%
  }\hfill
  \subfloat[\sysname-Z breakdown]{%
      \includegraphics[width=0.48\columnwidth]{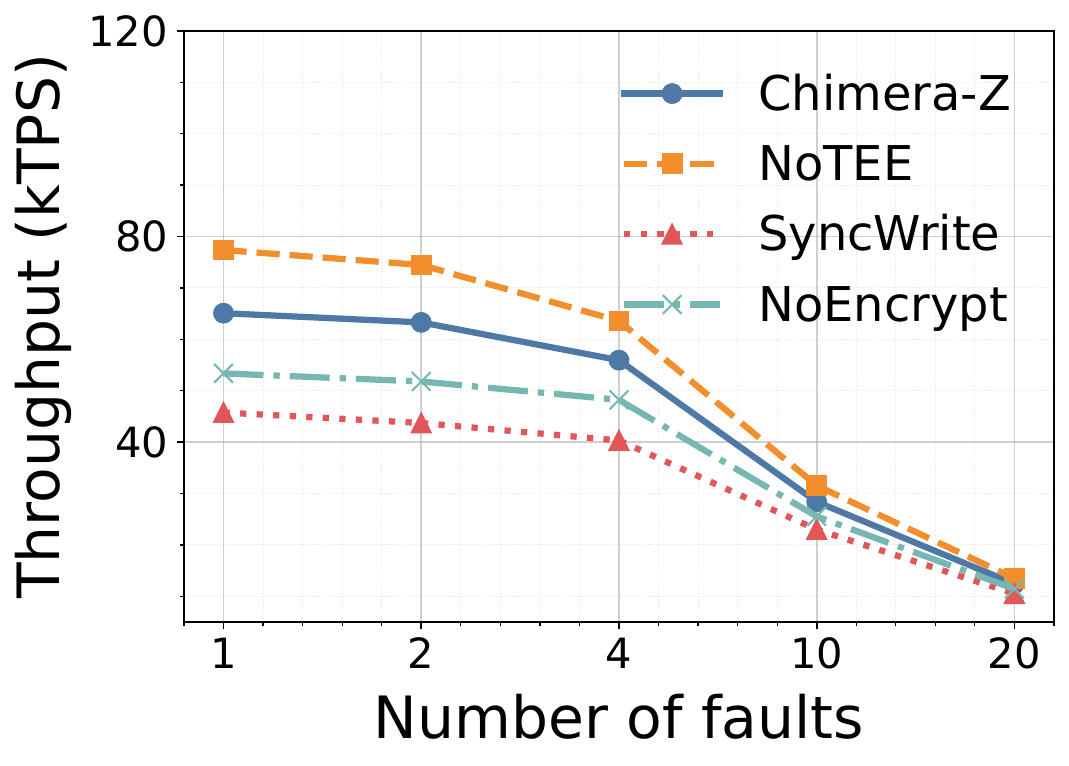}%
      \label{fig:subb-breakdown}%
  }
  \caption{Overhead profiling of TEE-related execution.}
  \label{fig:breakdown}
  \vspace{-4mm}
\end{figure}

\section{Related Work} \label{sec:RelatedWork}
We discuss prior work on confidential computing, TEE-assisted BFT consensus, and trusted counters. 

\bheading{Confidential BFT Service.} 
Confidential BFT services have recently attracted significant attention from industry and academia, driven by the explosive growth of cloud and decentralized applications. Notable industrial examples include SVR3~\cite{svr3} that ports Raft into TEEs to secure private key management, and Azure that provides Confidential Ledger service~\cite{Microsoft} atop CCF. 
Meanwhile, in academia, SecureKeeper~\cite{securekeeper} is among the first to use TEEs to protect metadata confidentiality in cloud settings with minimal changes to the Zab protocol~\cite{zab}. 
Similarly, Brandenburger~\etal~\cite{HyperSrds} integrate Intel SGX into Hyperledger Fabric~\cite{hyperledger} to secure smart contract execution.
CCF~\cite{2023ccf} uses enclaves to maintain a distributed key-value store and runs Raft~\cite{raft2014} to achieve low latency and tolerate a minority of Byzantine faults. 

However, most systems do not consider TEEs' rollback attacks. Engraft~\cite{engraft} first identified this threat and introduced TIKS, \ie, software-based counters, to enforce trusted counters for rollback protection. 
This approach corresponds to TC (as introduced in \ssecref{sec:taxonomy}), which represents the most general solution. More details of the trusted counter are introduced shortly.
Later versions of CCF~\cite{ccf2025} address the issue by reconfiguration, referred to as RC. However, these solutions degrade either performance or availability (\ssecref{sec:taxonomy}).

\bheading{TEE-Assisted BFT Consensus.} 
Unlike confidential BFT consensus that ports whole consensus protocols into TEEs, TEE-Assisted BFT consensus~\cite{Behl:2017:HSS, Liu:2019:SBC, zhang2021efficient, decouchant, gai2021dissecting, decouchant2024oneshot, achilles} usually utilizes TEEs to provide some trusted functions, such as the append-only log and monotonic counter, to minimize Trusted Computing Base (TCB). 
These trusted functions can prevent Byzantine nodes from equivocating messages, resulting in better scalability in terms of smaller quorum size and shorter transaction latency. 
Recently, FlexiBFT~\cite{gai2021dissecting} and Achilles~\cite{achilles} identified TEEs' rollback issues in TEE-Assisted BFT consensus and proposed RFT and NVR as solutions, respectively. However, RFT relaxes the tolerance, while NVR weakens the system's tolerance. See more details in \ssecref{sec:taxonomy}.

\section{Conclusion and Future Work} \label{sec:conclusion}
We systematically analyze existing TEE rollback-resilient solutions, establishing a taxonomy to assess their suitability for confidential BFT consensus. 
Building on the insights, we propose \sysname, a hybrid recovery framework that tailors recovery strategies for persistent state. 
We prove \sysname's correctness and complement our proofs with formal verification of its Braft design. 
We implement proof-of-concept prototypes atop Raft and ZooKeeper using Intel TDX, and our extensive evaluation demonstrates that \sysname delivers superior performance.

Next, we discuss our approach's limitations and potential extensions.
First, \sysname focuses on rollback-resilient recovery and does not currently support dynamic reconfiguration. Integrating reconfiguration with recovery is challenging because configuration updates are themselves stored in the replicated log. During recovery, a node must know the current configuration to safely recover the log, while the latest configuration may only exist inside the log being recovered. We leave the integration of reconfiguration into \sysname as future work. Second, although we focus on confidential BFT consensus, the core insight of \sysname, \ie, using protocol-level semantics to design tailored TEE recovery, extends to other confidential computing systems, including confidential MapReduce frameworks~\cite{schuster2015vc3}, federated learning~\cite{quoc2021secfl}, and encrypted databases~\cite{priebe2018enclavedb}. More broadly, separating critical metadata from bulk state and customizing recovery accordingly may benefit a wider range of stateful TEE applications.

\clearpage

\section*{Ethics Considerations}
This work studies rollback-resilient recovery for confidential BFT consensus systems. 
Our experiments are conducted on controlled cloud testbeds using synthetic workloads and do not involve human subjects, personal data, or attacks on third-party systems. 
The evaluated vulnerabilities are analyzed under an abstract threat model, and the artifacts are intended solely for research and reproducibility.
\bibliographystyle{IEEEtran}
\bibliography{bib}

\appendix

\subsection{Supplementary Background}
\subsubsection{Trusted Counter and Extensions} \label{appen:TC}
Trusted counter is the most representative and common method for rollback prevention. Generally, there are two types of counters: hardware-based and software-based. The first includes SGX counter~\cite{sgx_counter}, TPM counter~\cite{ADAM}, and TPM NVRAM~\cite{Ariadne, ICE, Memoir}. 
Hardware-based counters usually have poor performance, \ie, long latency (\eg, tens of milliseconds) for read and write operations and limited write cycles~\cite{Rote, narrator}. 

The second is virtual counters, which can be implemented by a single-write multiple-read register~\cite{narrator, narrator-pro, wang2024formally} or an append-only ledger~\cite{kaptchuk2020giving, nimble}. 
The former includes ROTE~\cite{Rote} and Narrator~\cite{narrator}, which adopt a two-phase broadcast protocol.
The latter can be realized by blockchain~\cite{kaptchuk2020giving} or CFT consensus as adopted in Nimble~\cite{nimble}. However, using them in confidential BFT consensus protocols introduces several communication steps. 
In this paper, we use Narrator-Pro, \ie, a single-write multiple-read register~\cite{narrator-pro}, as the software-based counters. 
Despite their high costs, we employ counters only for infrequently updated metadata, thereby avoiding protocol overhead for log commitment.

\bheading{Inc-store consistency dilemma.} There are two fundamental operations when using a trusted counter: incrementing the counter and sealing the state. Because these operations cannot be executed atomically, their sequence leads to two distinct patterns. 

\begin{packeditemize}
    \item \textit{Inc-then-store pattern~\cite{Rote, narrator, narrator-pro, engraft}.}
    In this approach, the counter is incremented before the state is sealed. This guarantees that no previously sealed state can be replayed, since the counter always moves forward. However, if a crash occurs after the counter has been incremented but before the state is sealed, the counter and state become permanently inconsistent, making recovery impossible. From the recovering node’s perspective, a mismatch between the counter value and the binding counter in the sealed state is indistinguishable from either a benign crash or a deliberate rollback attack.

    \item \textit{Store-then-inc pattern~\cite{Memoir, Ariadne}.} The state is sealed first, and then the counter is incremented. This avoids unrecoverable crashes, since the sealed state always exists even if the counter increment fails. However, this introduces a rollback window: an adversary can seal multiple different states under the same counter value and later replay an old one. After a reboot, the node cannot determine which state with the same counter value is most recent, allowing its state to be rolled back.
\end{packeditemize}

\subsubsection{Restricted Faults} \label{appen:restriction}
\sysname assumes at most $f$ nodes may fail concurrently. Without this assumption, the system may gradually lose liveness as no recovering leader can recover its log from collecting $f+1$ replies. Yet this limitation is not unique to our work. Diskless CFT protocols without stable storage, such as VR~\cite{liskov12vr} and variants of Paxos~\cite{chandra2007paxos, konczak2011jpaxos}, also share this constraint (no more than f crashed nodes concurrently). Moreover, all BFT protocols have a security threshold $f$. An adversary compromising more than $f$ nodes would disrupt system correctness. This also holds true for \sysname.

\subsubsection{Customized CFT Consensus Protocols} \label{appen:customization}
Except for rollback issues, 
Wang~\etal~\cite{engraft} also identify several safety and liveness violations of directly porting the Raft protocol into TEEs. 
To address these violations, Wang~\etal propose several countermeasures, including file encryption, network encryption and authentication, and malicious leader detection. 
In this paper, we focus on rollback-resilient recovery and so assume a customized CFT protocol with the above countermeasures running within TEEs. 
In other words, the customized CFT protocol can guarantee safety and liveness properties without rollback attacks.

\subsection{Additional Evaluation} \label{appen:addExp}

\bheading{Throughput vs. Latency.} 
Fig.~\ref{fig:scalability} illustrates the latency of the \sysname-B and its counterparts with increasing throughput until system saturation in both LAN and WAN deployments. 
With 10 faulty nodes, \sysname reaches maximum throughputs of 135.8 kTPS in LAN and 22.6 kTPS in WAN, closely matching or surpassing the best performance among all counterparts. Notably, in LAN, \sysname-B even outperforms Braft-DR, as asynchronous disk writes reduce persistence bottlenecks. Overall, the results confirm that \sysname introduces negligible overhead to log consensus and can even enhance performance in certain settings.

\begin{figure}[h]
  \centering
  \subfloat[f = 10, LAN]{%
      \includegraphics[width=0.48\columnwidth]{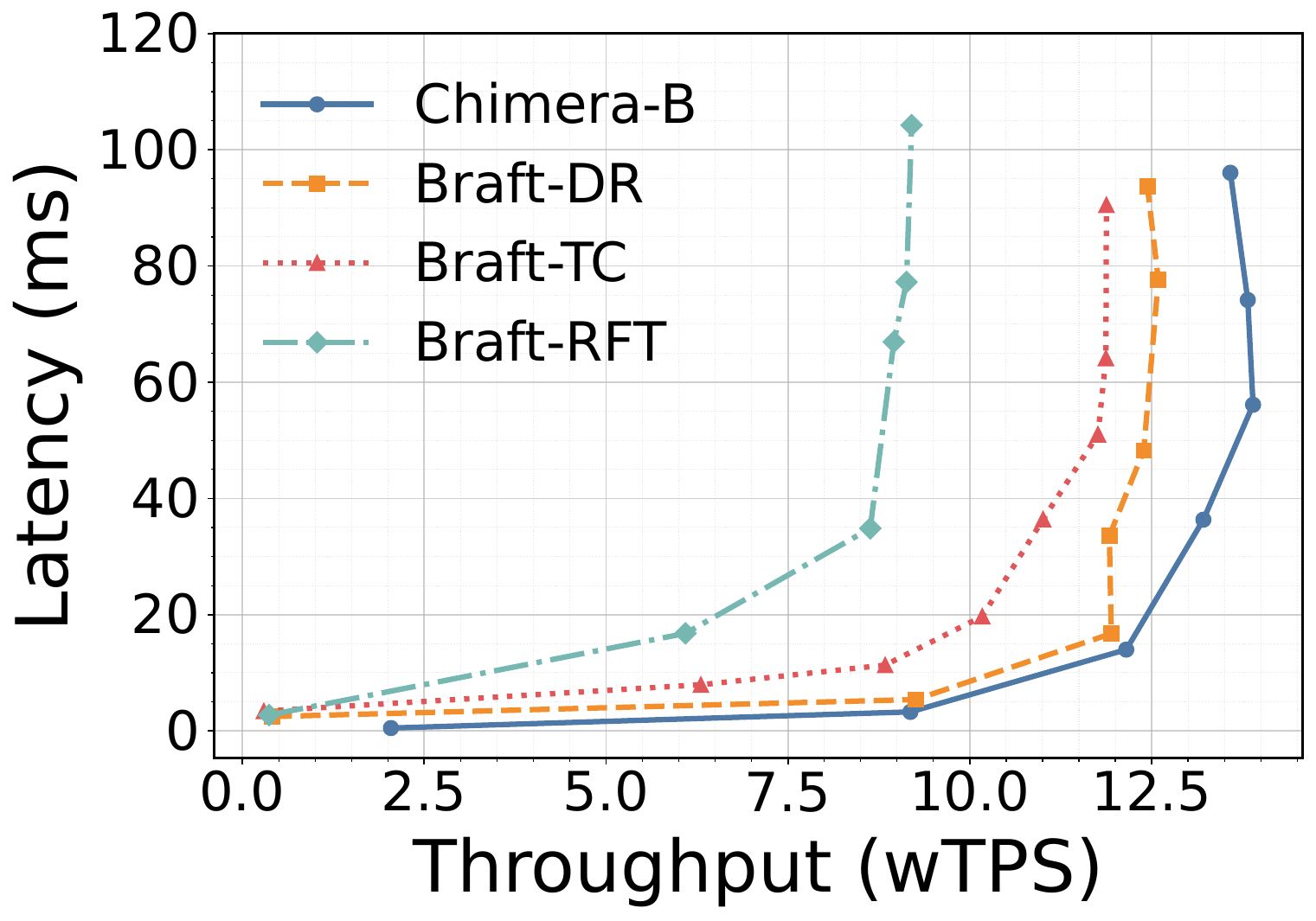}%
      \label{fig:suba}%
  }\hfill
  \subfloat[f = 10, WAN]{%
      \includegraphics[width=0.48\columnwidth]{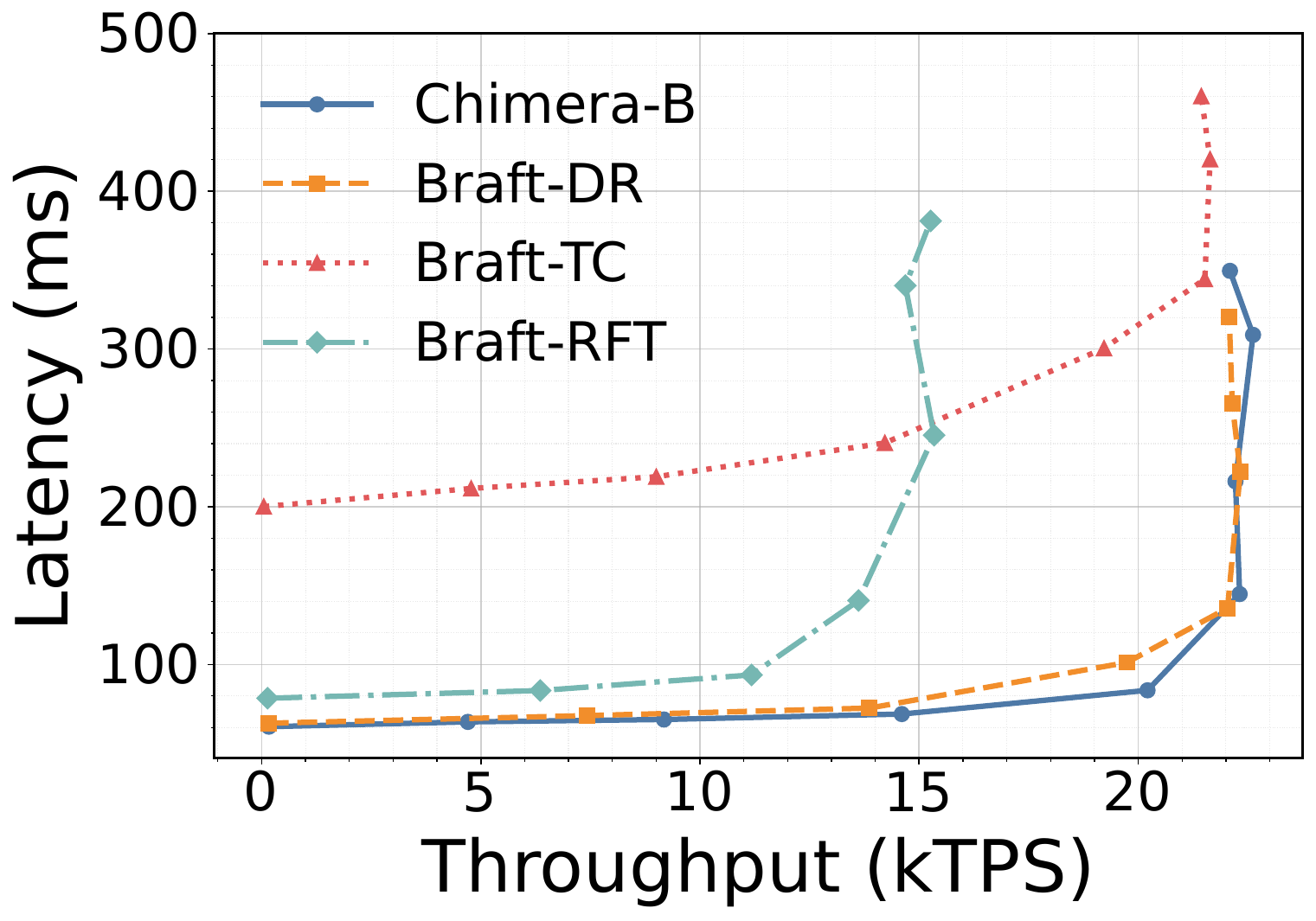}%
      \label{fig:subb}%
  }
  \caption{Throughput vs. Latency of \sysname-B and its counterparts.}
  \label{fig:scalability}
\end{figure}

\bheading{Throughput of Braft-RC under Recovering Faults.}
Fig.~\ref{fig:faultRC} shows the throughput variation during the recovery of faulty nodes in Braft-RC. The throughput behavior at the shutdown point is the same as that observed in \sysname-B. These effects follow the same reasoning as discussed earlier and are not elaborated here.

A key difference in Braft-RC is that recovery is realized through reconfiguration. Newly added nodes do not reload logs from disk but instead start synchronizing directly from the leader. During this synchronization phase, replication traffic competes with ongoing client requests, which slows down replication processing and temporarily reduces throughput until the synchronization completes.

\begin{figure}[h]
  \centering
  \includegraphics[width=\linewidth]{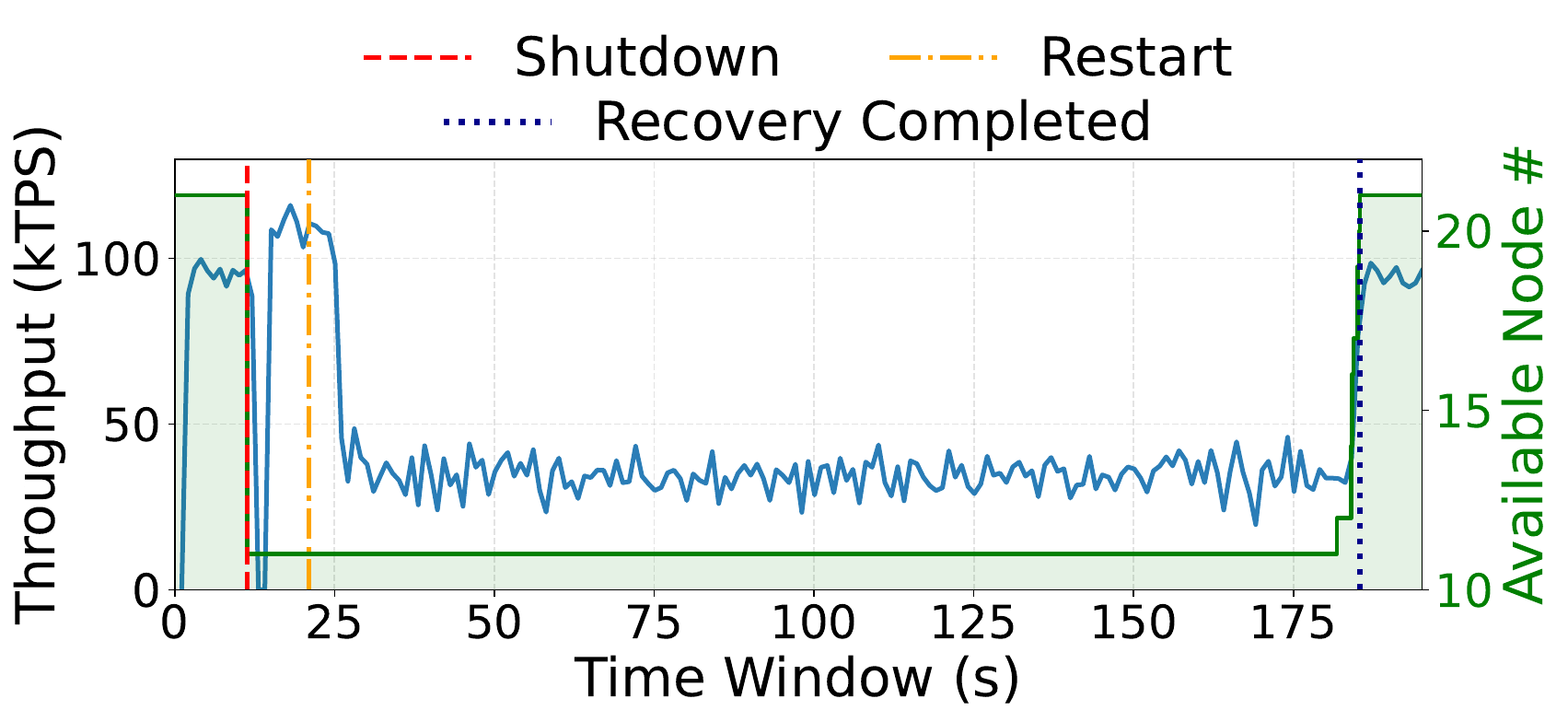} 
  \caption{The fault recovery process of Braft-RC.}
  \label{fig:faultRC}
\end{figure}

\balance
\subsection{Formal Modeling and Analysis} \label{appen:maude}
Our formal specification of \sysname-B consists of approximately 910 LoC in Maude.\footnote{The Maude specification is available at \url{https://github.com/Artifacts2026/CHIMERA/MaudeSpec}.} 
Our modeling follows Agha's \emph{actors} paradigm~\cite{DBLP:books/daglib/0066897}. Specifically, nodes are modeled as actors, and their communication is captured through message passing.
Upon receiving a message, a node may update its local state and possibly generate new messages. The overall system evolves through such message-triggered transitions.

\begin{table}
    \caption{Model checking results.}
    \label{tab:maude}
    \centering
    \scalebox{0.95}{
    \begin{tabular}{@{}lcccc@{}}
        \toprule[1pt]
        \textbf{Property} & \textbf{Metric} & \textbf{3 Nodes} & \textbf{3 Nodes w/ 2 reboots} & \textbf{5 Nodes} \\
        \midrule
        \multirow{2}{*}{P1} 
            & \#States & / & 2,313,191 [45] & 2,200,414[26] \\
            & Time   & 557s & 1,174s & 4,209s \\
        \midrule
        \multirow{2}{*}{P2} 
            & \#States & / & 2,691,520 [45] & 2,652,353 [26] \\
            & Time   & 1,294s & 2,883s & 9,814s \\
        \midrule
        \multirow{2}{*}{P3} 
            & \#States & / & 2,691,520 [45] & 2,652,353 [26] \\
            & Time   & 1,145s & 3,274s & 10,213s \\
        \midrule
        \multirow{2}{*}{P4} 
            & \#States & / & 2,691,520 [45] & 2,652,353 [26] \\
            & Time   & 1,146s & 3,369s & 10,469s \\
        \bottomrule[1pt]
    \end{tabular}}
\end{table}

We verify \sysname-B under TEE rollbacks using linear-temporal-logic (LTL) model checking, focusing on four key properties:

\begin{packeditemize}
\item \textbf{P1.} \emph{Election Safety:} at most one leader can be elected in any epoch.
\item \textbf{P2.} \emph{Leader Completeness:} once a log entry is committed by a leader, any subsequent leader contains this entry.
\item \textbf{P3.} \emph{Recovery Liveness:} a node undergoing reboot eventually completes its recovery procedure.
\item \textbf{P4.} \emph{Protocol Liveness:} every client request is eventually committed as a log entry.
\end{packeditemize}

Table~\ref{tab:maude} shows the model checking results for three cases: three nodes with one reboot, three nodes with two reboots, and five nodes. 
Note that the state space grows rapidly with additional nodes, making exhaustive verification infeasible within a reasonable time, a well-recognized challenge in the formal verification of distributed protocols \cite{DBLP:conf/tacas/LiuOZWM19,10.1145/3689031.3696069,DBLP:conf/icfem/LiuRSGM14,DBLP:journals/pvldb/SchvimerDH20,10.14778/3718057.3718065}. To address this, we utilize Maude's bounded search, which explores the state space
 with depth limits (shown in [square brackets] in Table~\ref{tab:maude}). This enables model checking up to a specified bound. 
Within the explored bounds, Maude reports no counterexamples to the four properties.

\end{document}